\documentclass[twoside,final,11pt]{entics} 
\usepackage{enticsmacro}
\usepackage{graphicx}
\usepackage[all]{xy}
\usepackage[mathscr]{eucal}
\usepackage{amssymb}
\usepackage{amsmath}
\usepackage{cleveref}
\usepackage{xparse}
\usepackage{ccicons}
\usepackage{bussproofs}

\usepackage{cmll}
\usepackage{tikz}
\usepackage{tikz-cd}
\usetikzlibrary{matrix}
\usetikzlibrary{arrows}
\usetikzlibrary{arrows.meta}
\usetikzlibrary{decorations.pathmorphing,shapes}
\usetikzlibrary{decorations.markings}

\newcommand{\commentout}[1]{}

\tikzset{
    dist/.style={->, decoration={"\mid" marking}},
    span/.style={decoration={"\mid" marking}},
    acyclic/.style={decoration={"\sim" above}},
    acyclic'/.style={decoration={"\sim" below}},
    cof/.style={>->},
    acycof/.style={cof, acyclic},
    acycof'/.style={cof, acyclic'},
    fibrant/.style={->>},
    acyfib/.style={fibrant, acyclic},
    acyfib'/.style={fibrant, acyclic'},
    spanmaps/.style={decoration={"\circ" marking}},
    sepfib/.style={fibrant, spanmaps},
}
\tikzcdset{large/.style={column sep=2cm}}
\NewDocumentCommand{\morphism}{somO{}m}{%
    \IfNoValueF{#2}{#2:}%
    \begin{tikzcd}
        [ ampersand replacement=\&, \IfBooleanT{#1}{cramped} ]
        #3\arrow[r,#4] \& #5
    \end{tikzcd}%
}

\newcommand{\Xmonobjc}{\gamma}
\newcommand{\Xmonobjcp}{\gamma'}

\newcommand{\tuple}[1]{\langle #1\rangle}

\newcommand{\Iso}[1]{\mathbf{Iso}(#1)}

\newcommand{\Nerve}[1]{N_A}

\newcommand{\identity}[1]{\mathbf{id}_{#1}}

\newcommand{\ty}{a}

\newcommand{\id}{id}
\makeatletter
\newcommand{\colim@}[2]{\vtop{\m@th\ialign{##\cr
    \hfil$#1\operator@font lim$\hfil\cr
    \noalign{\nointerlineskip\kern1.5\ex@}#2\cr
    \noalign{\nointerlineskip\kern-\ex@}\cr}}}
\newcommand{\colim}{%
  \mathop{\mathpalette\colim@{\rightarrowfill@\textstyle}}\nmlimits@
}

\newcommand{\Sym}[1]{\mathbf{Sym}#1}
\newcommand{\SymPlus}[1]{\Sym_{!} #1}
\newcommand{\SymProd}[1]{\Sym_{?} #1}
\newcommand{\SymBang}[1]{\Sym_{!} #1}
\newcommand{\SymWhyNot}[1]{\Sym_{?} #1}
\newcommand{\FamAface}[1]{\mathbf{Fam}_{+} #1}
\newcommand{\FamBface}[1]{\mathbf{Fam}_{\times} #1}
\newcommand{\whynotword}[1]{\langle#1\rangle}
\newcommand{\whynotmorph}[2]{\whynotword{#1;#2}}
\newcommand{\genwhynotword}[2]{\whynotword{#1_{1},\dotsc,#1_{#2}}}
\newcommand{\genwhynotmorph}[3]{\whynotmorph{#1}{#2_{1},\dotsc,#2_{#3}}}

\newcommand{\bangword}[1]{[#1]}
\newcommand{\bangmorph}[2]{\bangword{#2;#1}}
\newcommand{\genbangword}[2]{\bangword{#1_{1},\dotsc,#1_{#2}}}
\newcommand{\genbangmorph}[3]{\bangmorph{#1}{#2_{1},\dotsc,#2_{#3}}}
\newcommand{\permutationgroup}[1]{\mathfrak S_{#1}}

\definecolor{red}{rgb}{1,0,0}
\definecolor{green}{rgb}{0,.5,0}
\definecolor{antigreen}{rgb}{.8,0,.8}
\definecolor{blue}{rgb}{0,0,1}
\definecolor{paleblue}{rgb}{.2,.2,.6}
\definecolor{littleblue}{rgb}{.2,.2,.6}
\definecolor{darkblue}{rgb}{.3,.3,1}
\definecolor{graille}{rgb}{.01,.01,.00}
\definecolor{yellow}{rgb}{.5,.5,.2}
\definecolor{cyan}{rgb}{0.75,1,1}
\definecolor{lightgreen}{rgb}{.139,.69,.19}

\newcommand{\ie}{\emph{i.e.}}

\newcommand*{\seqdots}[3]{\whynotword{#1_{#2}, \dots, #1_{#3} } }
\newcommand{\tyl}{\vec{a}}
\newcommand{\tylother}{\vec{b}}
\newcommand{\tysf}{\vec{f}}
\newcommand{\Shuffle}{\mathbf{Shuffle}}
\newcommand{\contra}[2]{\mathbf{contr}(#1,#2)}

\newcommand{\familyBface}[3]{\langle #1 \rangle_{#2 \in #3}}

\newcommand{\push}[1]{\textsf{push}_{#1}}

\newcommand{\spanR}{\mathcal{R}}
\newcommand{\spanS}{\mathcal{S}}
\newcommand{\spanT}{\mathcal{T}}
\newcommand{\LeftAdjoint}[1]{L_{#1}}
\newcommand{\RightAdjoint}[1]{R_{#1}}

\newcommand{\distrib}{\mathbf{dist}}

\newcommand{\Set}{\mathbf{Set}}
\newcommand{\Cat}{\mathbf{Cat}}

\newcommand{\CatId}{\Cat^{=}}
\newcommand{\CatNat}{\Cat^{\Rightarrow}}
\newcommand{\CatEq}{\Cat^{\simeq}}
\newcommand{\sourcefunctor}{\mathbf{source}}
\newcommand{\targetfunctor}{\mathbf{target}}

\newcommand{\DistBin}[2]{\mathbf{Dist}(#1,#2)}
\newcommand{\Dist}{\mathbf{Dist}}
\newcommand{\Acategory}{A}
\newcommand{\Bcategory}{B}
\newcommand{\Ccategory}{C}
\newcommand{\Rcategory}{R}
\newcommand{\Scategory}{S}
\newcommand{\Tcategory}{T}
\newcommand{\Xcategory}{X}

\newcommand{\relsem}[1]{[\![ #1 ]\!]}
\newcommand{\sem}[1]{[\![ #1 ]\!]}

\newcommand{\SomeProof}[2]{#1\,\vdash\,#2}
\newcommand{\Xtensor}{\otimes}

\newcommand{\Rel}{\mathbf{Rel}}
\newcommand{\BiCatRel}{\mathbf{\mathscr{W}}}

\newcommand{\SpanCat}{\mathbf{Span}}
\newcommand{\SpanCatBin}[2]{\mathbf{Span}(#1,#2)}
\newcommand{\HoSpanCat}{\mathbf{HoSpan}}

\newcommand{\FibSpanCat}{\mathbf{SFibSpan}}
\newcommand{\true}{\mathbf{true}}
\newcommand{\false}{\mathbf{false}}
\newcommand{\relsource}{\mathbf{source}}
\newcommand{\reltarget}{\mathbf{target}}
\newcommand{\IntGpd}{\mathbb{J}}

\newcommand{\Rsourcefunctor}{\mathbf{source}_R}
\newcommand{\Rtargetfunctor}{\mathbf{target}_R}
\newcommand{\Ssourcefunctor}{\mathbf{source}_S}
\newcommand{\Stargetfunctor}{\mathbf{target}_S}
\newcommand{\Tsourcefunctor}{\mathbf{source}_T}
\newcommand{\Ttargetfunctor}{\mathbf{target}_T}

\newcommand{\concat}{\mathbf{concat}}
\newcommand{\deshuffle}{\mathbf{deshuffle}}

\newcommand{\zerocategory}{\mathbf{0}}
\newcommand{\onecategory}{\mathbf{1}}

\newcommand{\one}{1}

\newcommand{\fstproj}{\mathtt{fst}}
\newcommand{\sndproj}{\mathtt{snd}}

\newcommand{\differential}{}

\newcommand\monlift[1]{#1^\dagger}
\newcommand\sfiblift[1]{#1^\ddagger}

\def\conf{MFPS 2026} 	
\volume{6}			
\def\volu{6}			
\def\papno{11}			
\def\mydoi{19635}
\def\lastname{Melliès, Olimpieri, Vaux Auclair} 
\def\runtitle{An explicit shuffle construction\ldots} 
\def\copynames{P.-A. Melliès, F. Olimpieri, L. Vaux Auclair}    

\begin{document}

\begin{frontmatter}
\title{An Explicit Shuffle Construction\\ of the Homotopy Span Model of \differential Linear Logic} 

  \author{Paul-André Melliès\thanksref{a}\thanksref{afunding}}	
  \author{Federico Olimpieri\thanksref{b}\thanksref{bfunding}}		
  \author{Lionel Vaux Auclair\thanksref{c}\thanksref{cfunding}}		
  \address[a]{CNRS, Université Paris Cité, Inria, Paris, France}
  \address[b]{Aix Marseille Univ, CNRS, LIS, Marseille, France}
  \address[c]{Aix Marseille Univ, CNRS, I2M, Marseille, France}
  \thanks[afunding]{This work has received funding from the European Research Council
under the European Union’s Horizon 2020 research and innovation
programme (Synergy Project Malinca, ERC Grant Agreement No 670624).}
\thanks[bfunding]{This work received support from the French government under the France 2030 investment plan, as a part of the Initiative d'Excellence d'Aix-Marseille Université-AMIDEX, AMX-CPJ-2024-09.}
  \thanks[cfunding]{Partially funded by French ANR projects PPS (ANR-19-CE48-0014), Lambda-Comb (ANR-21-CE48-0017), and ReCiProg (ANR-21-CE48-0019).}

\begin{abstract}
We give a direct and explicit construction of the homotopy span model of \differential linear logic 
originally formulated by Melli{\`e}s using abstract homotopy theory in the Quillen category~$\mathbf{Cat}$
of small categories equipped with its natural model structure.
%
%
%
Following the principles of homotopy theory, the formulas and proofs of linear logic
are interpreted in the model as small categories up to categorical equivalence,
the notion of weak homotopy equivalence underlying the natural Quillen model structure on~$\Cat$.
%
%
%
Here, we explain how the original interpretation of proofs as \emph{fibrant spans}
can be replaced by a more liberal interpretation of proofs as \emph{separately fibrant spans}.
This relaxation from fibrant spans to separately fibrant spans enables us
to give a simple and intuitive description of the homotopy span model,
where the axiom and cut links are interpreted as identity spans,
and where the symmetries of the exponential modality are captured
by a shuffle structure on the interpretation of proofs.
%
%
\end{abstract}
\end{frontmatter}

\section{Introduction}
The relational semantics of linear logic \cite{Girard87,ehrhard-completeness,ehr:fin}
has become today a fundamental tool to investigate and to analyze
the structure of higher-order computations.
It provides at the same time an abstract mathematical framework
which played a central role in the discovery of differential linear logic
\cite{ehr:fin,EhrhardRegnier06,Ehrhard18}, 
and a concrete computational interpretation through its strong connection 
to intersection types~\cite{deCarvalho18,BucciarelliEhrhard00,BucciarelliEhrhard01,Terui12}
and higher-order model checking~\cite{KobayashiOng11,GrelloisMellies15}.

About twenty years ago, Fiore, Gambino, Hyland and Winskel
introduced a categorified version of the relational semantics
based on distributors, motivated by an unexpected link
with Joyal's theory of combinatorial species~\cite{FioreGambinoHylandWinskel07}.
This step of categorification is needed in order to obtain a concrete interpretation of linear logic,
where the elements of the denotation of a proof coincide with the (equivalence classes of)
derivation trees typing the proof in a specific intersection type system~\cite{Olimpieri21}.

The model of generalized species has been a source of inspiration for a new generation
of categorified models based on distributors~\cite{joyal:op,Galal20,FioreGalalPaquet24}.
%
%
An alternative way to categorify relations between sets
is to refine them into functorial spans between categories.
Melli{\`e}s explained how to construct a functorial span model 
of linear logic~\cite{Mellies19lics} based on template games,
where functorial spans are considered up to homotopy in the natural Quillen model 
structure.
Following the homotopy span model, other approaches to functorial spans
based on biorthogonality~\cite{CF2023} or $\infty$-categories~\cite{HaringtonMimram25} have been developed.
%
%
In this paper, we will give a more explicit presentation of the homotopy span model,
exploiting the key notion of shuffle permutation on the letters of a finite word.
We start by recalling the homotopy span model and its relationship 
with the generalized species interpretation of linear logic.

\paragraph{Two alternative categorifications of relations}
Given two sets~$A$ and~$B$, 
every binary relation $R\subseteq A\times B$ 
may be equivalently described
as a function $A\times B\to \mathbf{2}$ where $\mathbf{2}=\{\true,\false\}$ 
denotes the set of booleans, 
or as a span~$A\leftarrow S\rightarrow B$ consisting
of a set~$S$ and of two jointly injective functions $\relsource:S\to A$ and $\reltarget:S\to B$.
%
%
Each point of view induces a specific
way to \emph{categorify} the notion of binary relation,
and to refine the category~$\Rel$ of sets and relations
into a bicategory~$\BiCatRel$ 
of \emph{categorified relations} 
between small categories~$\Acategory$ and~$\Bcategory$.

%
In the first approach, one defines the bicategory~$\BiCatRel=\Dist$ of 
small categories and \emph{distributors},
where a distributor
$
\begin{tikzcd}[column sep = 1em]
M \, : \, \Acategory\arrow[rr,dist] && \Bcategory
\end{tikzcd}
$
is defined as a set-valued functor
$M : \Acategory\times\Bcategory^{\hspace{.05em}op}\to\Set$,
composed by a coend formula,
and where a 2-cell $\theta:M\Rightarrow N$ between distributors 
$
\begin{tikzcd}[column sep = 1em]
M , N \, : \, \Acategory\arrow[rr,dist] && \Bcategory
\end{tikzcd}
$
is defined as a natural transformation
between the underlying functors.

In the second approach, one defines the bicategory~$\BiCatRel=\SpanCat$ 
of small categories and \emph{spans of functors}
where a span
$
\begin{tikzcd}[column sep = 1em]
\spanS \, : \, \Acategory\arrow[rr,span] && \Bcategory
\end{tikzcd}
$
is defined as a triple $\spanS=(\Scategory,\sourcefunctor,\targetfunctor)$
consisting of a small category~$\Scategory$ called the \emph{support category}
of~$\spanS$, and of a pair of functors
\begin{equation}\label{equation/functorial-span}
\begin{tikzcd}[column sep = 3em]
\Acategory && \Scategory   \arrow[ll,"{\sourcefunctor}"{swap}] \arrow[rr,"{\targetfunctor}"] && \Bcategory
\end{tikzcd}
\end{equation}
composed by computing pullbacks in $\Cat$,
and where a 2-cell $\theta:\spanS\Rightarrow\spanT$ 
(called a \emph{simulation} 2-cell)
between spans 
$
\begin{tikzcd}[column sep = 1em]
\spanS, \spanT \, : \, \Acategory\arrow[rr,span] && \Bcategory
\end{tikzcd}
$
is defined as a functor $\theta:\Scategory\to\Tcategory$
between the underlying support categories,
making the diagram commute:
\begin{equation}\label{equation/2-cell-between-spans}
\begin{tikzcd}[column sep = 3em, row sep = .2em]
&& 
{\Scategory}
\arrow[rrd,"{\Stargetfunctor}"]
\arrow[lld,"{\Ssourcefunctor}"{swap}]
\arrow[dd,"{\theta}"]
\\
{\Acategory}
&&&&
{\Bcategory\,.}
\\
&&
{\Tcategory}
\arrow[rru,"{\Ttargetfunctor}"{swap}]
\arrow[llu,"{\Tsourcefunctor}"]
\end{tikzcd}
\end{equation}
Note that $\SpanCat$ is the usual span construction 
of a bicategory on the category $\Cat$ with pullbacks.
Interestingly, both bicategories~$\Dist$ and~$\SpanCat$ 
are $\ast$-autonomous (in fact compact closed) 
with tensor product $\Acategory,\Bcategory\mapsto\Acategory\otimes\Bcategory$
defined as categorical product~$\Acategory,\Bcategory\mapsto\Acategory\times\Bcategory$,
as well as cartesian and cocartesian with sum $\Acategory,\Bcategory\mapsto\Acategory\oplus\Bcategory$
and product $\Acategory,\Bcategory\mapsto\Acategory\&\Bcategory$
both defined as categorical sum $\Acategory,\Bcategory\mapsto\Acategory+\Bcategory$.
For that reason, $\Dist$ and~$\SpanCat$ define models of the multiplicative additive 
fragment MALL of linear logic.
%
%

\paragraph{Similitudes and differences between $\Dist$ and~$\SpanCat$.}
Before reviewing how the bicategories
$\Dist$ and~$\SpanCat$
were equipped in~\cite{FioreGambinoHylandWinskel07,Mellies19lics}
with an interpretation of the exponential modality $A\mapsto {!}A$
to obtain a model of \differential linear logic, we 
find useful to 
mention that both categorifications
share an important formal property:
namely,
each functor $F:\Acategory\to\Bcategory$
induces an adjunction
\[
\begin{tikzcd}[column sep=.8em]
\LeftAdjoint{F}
\quad : 
&
\Acategory
\hspace{.4em}
\arrow[rrrr,span,yshift=.1em,bend left = 25]
&& \bot &&
\hspace{.4em}
\Bcategory
\arrow[llll,span,yshift=0pt,bend left = 25]
&
 : \quad
\RightAdjoint{F}
\end{tikzcd}
\]
in the bicategories~$\BiCatRel=\Dist$ and~$\BiCatRel=\SpanCat$.
In $\Dist$, the adjunction is defined as follows:
\[
\begin{array}{cc}
\begin{tikzcd}[column sep = 1em]
\LeftAdjoint{F}
\hspace{.5em} = \hspace{.5em}
(a,b) \hspace{.2em} \mapsto \hspace{.2em} \Bcategory(b,Fa) 
\hspace{.5em} : \hspace{.5em}
\Acategory\times\Bcategory^{op}
\arrow[rr]
&&
\Set
\end{tikzcd}
\\
\begin{tikzcd}[column sep = 1em]
\RightAdjoint{F}
\hspace{.5em} = \hspace{.5em}
(b,a) \hspace{.2em} \mapsto \hspace{.2em} \Bcategory(Fa,b) 
\hspace{.5em} : \hspace{.5em}
\Bcategory\times\Acategory^{op}
\arrow[rr]
&&
\Set
\end{tikzcd}
\end{array}
\]
while the adjunction~$\LeftAdjoint{F}\dashv\RightAdjoint{F}$
of spans in $\SpanCat$ is defined as follows
\[
\begin{tikzcd}[column sep =1.5em]
\LeftAdjoint{F}
\hspace{.5em} = \hspace{.5em}
\Acategory && 
\Acategory
\arrow[ll,"{\identity{\Acategory}}"{swap}]
\arrow[rr,"{F}"]
&&
\Bcategory
\end{tikzcd}
\hspace{3em}
\begin{tikzcd}[column sep = 1.5em]
\RightAdjoint{F}
\hspace{.5em} = \hspace{.5em}
\Bcategory && 
\Acategory
\arrow[ll,"{F}"{swap}]
\arrow[rr,"{\identity{\Acategory}}"]
&&
\Acategory\,.
\end{tikzcd}
\]
%
Note that if we consider the category $\Cat$ as the 2-category $\CatId$ with
identity $2$-cells only, then 
the construction $F\mapsto \LeftAdjoint{F}$ is functorial in both cases,
in the sense that it defines a weak 2-functor between bicategories (also called
a pseudofunctor) $\LeftAdjoint{}:\CatId\to\BiCatRel$
for both $\Dist$  and $\SpanCat$.
On the other hand, one key difference between~$\Dist$ and~$\SpanCat$ is
that if, instead of $\CatId$, we consider $\Cat$ as the 2-category $\CatNat$
with all natural transformations as $2$-cells, then $\LeftAdjoint{}:\CatId\to\BiCatRel$ extends to
a weak 2-functor $\LeftAdjoint{}:\CatNat\to\BiCatRel$ in the case
$\BiCatRel=\Dist$ but not in the case $\BiCatRel=\SpanCat$.
Indeed, a $2$-cell from $\LeftAdjoint{F}$ to $\LeftAdjoint{G}$ in $\SpanCat$
is given by an endofunctor $\theta:A\to A$ making the
diagram~\eqref{equation/2-cell-between-spans} commute
with both $\Ssourcefunctor$ and $\Tsourcefunctor$ being identity functors $\identity{\Acategory}$:
the only possible choice for $\theta$ is also $\identity{\Acategory}$, which forces $F=G$.
%

From now on, whenever we consider $\Cat$ as a 2-category (and thus bicategory),
we do mean $\CatNat$, unless specified otherwise.



%
%
%

\paragraph{The free symmetric monoidal category construction.}
As we will see below, the interpretation of the exponential modality $A\mapsto {!}A$
in the bicategories~$\BiCatRel=\Dist$ and~$\BiCatRel=\SpanCat$
is derived in~\cite{FioreGambinoHylandWinskel07,Mellies19lics}
from the 2-monad~$\Sym:\Cat\to\Cat$ which transports
every small category~$\Acategory$ to the symmetric strict monoidal 
category~$\Sym\,\Acategory$ freely generated by~$\Acategory$.
In both cases, $\Sym$ categorifies the finite multiset monad
$\mathbf{MSet}_{\mathrm{fin}}:\Set\to\Set$ from which one derives the
exponential modality $A\mapsto {!}A$ in the relational semantics of
\differential linear logic.

We recall that the category~$\Sym\,\Acategory$
%
has its objects defined as finite words $\genwhynotword{a}{n}$ 
of objects of $\Acategory$, and its morphisms 
$\genwhynotword{a}{n}\to\genwhynotword{a'}{n}$
defined as tuples $\genwhynotmorph{\sigma}{f}{n}$ 
consisting of a permutation $\sigma\in\permutationgroup{n}$
together with a finite word~$\genwhynotword{f}{n}$
of morphisms $f_i:a_{\sigma(i)}\to a'_i$ in the category~$\Acategory$ ;
there are no morphisms between words
of different lengths.
The construction defines a 2-monad~$\Sym:\Cat\to\Cat$ 
equipped with cartesian natural 2-transformations:
the multiplication $\mu:\Sym\circ\Sym\Rightarrow\Sym$ given by the
concatenation of words of words, and unit $\eta:Id\Rightarrow\Sym$ given by
the singleton word.

Moreover, every pair of small categories~$\Acategory$ and $\Bcategory$
induces a pair of functors
\begin{equation}\label{equation/seely-equivalence-1}
\begin{tikzcd}[column sep=3.4em]
\Sym\,(\Acategory+\Bcategory)
\,\,
\arrow[rr,"{\deshuffle_{\Acategory,\Bcategory}}",yshift=.35em]
&&
\,\,
\Sym\,\Acategory\,\times\,\Sym\,\Bcategory
\arrow[ll,"{\concat_{\Acategory,\Bcategory}}",yshift=-.35em]
\end{tikzcd}
\end{equation}
defining an equivalence of categories, as noted in \cite[Lemma 5.2.1]{FioreGambinoHylandWinskel07}.
These functors, which can be derived from the universal
property of $\Sym$, are explicitly defined as follows:
\[
\begin{tikzcd}[column sep=.5em,row sep=0em]
\concat_{\Acategory,\Bcategory} & = & (\genwhynotword{a}{m},\genwhynotword{b}{n})
& \arrow[rrr,|->] &&& \hspace{1.5em} \whynotword{a_1,\dotsc,a_m,b_1,\dotsc,b_n} \hspace{1.5em}
\\
\deshuffle_{\Acategory,\Bcategory} & = & \genwhynotword{c}{m+n}
& \arrow[rrr,|->] &&& \quad  (\genwhynotword{a}{m},\genwhynotword{b}{n})
\end{tikzcd}
\]
where
the finite words $\genwhynotword{a}{m}$ and $\genwhynotword{b}{n}$ 
are uniquely characterized 
by the fact that there exist jointly surjective functions
$f:\{1,\dots,m\}\to\{1,\dots,m+n\}$ and $g:\{1,\dots,n\}\to\{1,\dots,m+n\}$
such that $c_{f(i)}=a_{i}$ for all $1\leq i\leq m$
and $c_{g(j)}=b_j$ for all $1\leq j\leq n$.
The families $\concat$ and $\deshuffle$ define
natural transformations between the 2-functors~$\Cat\times\Cat\to\Cat$
defined as $\Acategory,\Bcategory\mapsto\Sym\,(\Acategory+\Bcategory)$
and $\Acategory,\Bcategory\mapsto\Sym\,\Acategory\times\Sym\,\Bcategory$.
Both natural 2-transformations are moreover cartesian,
in the sense that the commuting squares for naturality 
are pullback squares.
The 2-monad $\Sym:\Cat\to\Cat$ also comes with functors
\begin{equation}\label{equation/seely-equivalence-2}
\begin{tikzcd}[column sep=3.4em]
\Sym\,\zerocategory
\,\,
\arrow[rr,"{\deshuffle_{\zerocategory}}",yshift=.35em]
&&
\,\,
\onecategory
\arrow[ll,"{\concat_{\zerocategory}}",yshift=-.35em]
\end{tikzcd}
\end{equation}
defining an equivalence of categories,
where~$\zerocategory$ is the empty category, and $\onecategory$ is the singleton category.
The family of functors $\concat$ equips the 2-functor~$\Sym:\Cat\to\Cat$
with the structure of a symmetric monoidal 2-functor from $(\Cat,+,\zerocategory)$
to $(\Cat,\times,\onecategory)$, whose monoidal coercions, in both directions,
are categorical equivalences.
For these reasons the 2-monad $\Sym$ may be called a cartesian symmetric
monoidal 2-monad.

From now on, we will call \eqref{equation/seely-equivalence-1} 
and \eqref{equation/seely-equivalence-2} \emph{Seely equivalences} 
since they induce what plays the rôle of \emph{Seely isomorphisms}
in the categorified interpretation of \differential linear logic discussed
below.

\paragraph{The distributor model of \differential linear logic.}
%
Fiore, Gambino, Hyland and Winskel
explained in their seminal and celebrated paper
on generalised species~\cite{FioreGambinoHylandWinskel07}
how to use the free symmetric monoidal 
category 2-monad $\Sym:\Cat\to\Cat$
in order to turn $\Dist$ into a 2-dimensional model of \differential linear logic.
The construction relies on two main ingredients.

The first ingredient is the existence of a distributivity law 
between~$\Sym$ and the presheaf (relative) monad on~$\Cat$ \cite{fiore:rel},
which enables one to extend the 2-monad $\Sym:\Cat\to\Cat$
to a weak 2-monad~$\SymWhyNot:\Dist\to\Dist$ --
see the discussion in~\cite{Hyland10}.
The weak 2-monad is then transformed by duality into a weak 2-comonad
$\SymBang:\Dist\to\Dist$
obtained by precomposing and postcomposing $\SymWhyNot$ 
with the 2-functor $\mathbf{op}:\Dist\to\Dist^{op(1)}$
which sends every small category~$\Acategory$ 
to its opposite category~$\Acategory^{\hspace{.05em}op}$.
Here, we write $\Dist^{op(1)}$ for the bicategory~$\Dist$
where the orientation of 1-cells (but not of 2-cells) has been reversed.

The second ingredient, less known but just as important as the first one, is
the fact that, as discussed above, the weak 2-functor $\LeftAdjoint{}:\CatId\to\Dist$
does extend to a weak 2-functor $\LeftAdjoint{}:\CatNat\to\Dist$.
This ensures that $\LeftAdjoint{}$ transports
the Seely equivalence~\eqref{equation/seely-equivalence-1} 
and~\eqref{equation/seely-equivalence-2} from the 2-category~$\Cat$
to the corresponding Seely equivalence in the bicategory~$\Dist$.
%

%
%


%

\paragraph{An apparent obstruction to the construction of the span model of linear logic.}
Somewhat surprisingly, one bumps against a serious obstruction
when one tries to obtain a model of linear logic based on spans,
by extending the 2-monad $\Sym:\Cat\to\Cat$ to the
bicategory~$\SpanCat$ instead of~$\Dist$.
Indeed, as mentioned above and in contrast to what happens with~$\Dist$,
the weak 2-functor $\LeftAdjoint{}:\CatId\to\SpanCat$ does not extend to
$\CatNat$. Hence, it does not transport the Seely
equivalence~\eqref{equation/seely-equivalence-1} living in the
2-category~$\CatNat$
to a Seely equivalence living in the bicategory~$\SpanCat$.
On the other hand, we are not far from a linear logic model, since $\Sym$ lifts
to a weak 2-monad on $\SpanCat$ as a cartesian 1-monad on $\Cat $
\cite{burroni1971tcat,paquet2023strong};
$\Sym$ is moreover both lax and oplax monoidal on $\Cat $, as supported by the
Seely equivalences \eqref{equation/seely-equivalence-1}.
For this reason, $\Sym $ may be called a bilax monoidal (2-)monad on $\Cat $
\cite{AguiarMahajan10},
and the whole structure lifts to $\SpanCat $ because
\eqref{equation/seely-equivalence-1} is cartesian.
However, the fact that $L $ does not extend to the 2-dimensional structure of
$\CatNat$ implies in particular that the equivalence
\eqref{equation/seely-equivalence-1} in $\CatNat $ is \emph{not} transported to
an equivalence in the bicategory $\SpanCat$.


Melli\`es~\cite{Mellies19lics} explained
how to use abstract homotopy theory in order to resolve the obstruction
and to define a homotopy variant~$\HoSpanCat$ of the original bicategory~$\SpanCat$
suitable for the interpretation of \differential linear logic.
%
The philosophy of abstract homotopy theory is that the ``good'' homotopic constructions
performed in a Quillen model category
should not depend on the choice of objects up to homotopy.
%
Given an object, there is often a large number of alternative choices
of objects equivalent up to homotopy, and a good practice is thus 
to pick a specific object defined in a principled way.

In his original construction, Melli{\`e}s~\cite{Mellies19lics} 
interpreted proofs using fibrant objects and fibrant replacements in the hom-category $\SpanCat(\Acategory,\Bcategory)$.
%
Our goal in the present work is to give a more direct and explicit
construction of this 2-dimensional model, 
where the fibrant spans are replaced by separately fibrant spans
carefully chosen at each step of the interpretation.
This construction highlights the combinatorial nature of the interpretation,
and reveals the existence of a factorization system on shuffles 
in the symmetric monoidal category $\Sym(A)$, seen as an operad.

\paragraph{Plan of the paper.}
After this long but necessary introduction on the bicategorical semantics of
linear logic, we will expose in more detail the homotopy span model recently
formulated by Melli\`es~\cite{Mellies19lics} in \cref{section/hospan}.
In particular, we explain how it takes advantage of homotopy ideas in order to
transport the Seely equivalences in $\CatNat$ to Seely equivalences in the
homotopy variant $\HoSpanCat$, constructed from $\SpanCat$ using the natural
Quillen model structure on~$\Cat$.

In \cref{section/LL} and onwards, we describe how the formulae and proofs
of linear logic are interpreted in our new and more explicit formulation of the
model, based on shuffles.
The interpretation relies in particular on a factorization property of the operad $\Sym(A)$,
of independent interest, which we choose to present in \cref{section/factorization}.
One main contribution is to show how this factorization property
allows to give a simple and concrete separately fibrant interpretation
of promotion: this is done in \cref{section/promotion-rule}.
We then conclude in \cref{section/conclusion}.

\section{The homotopy span model of \differential linear logic}
\label{section/hospan}

The key observation made in~\cite{Mellies19lics} is that a natural isomorphism
$\varphi:F\Rightarrow G$ between functors $F,G:\Acategory\to\Bcategory$ in
$\Cat$
is \emph{not} transported to an invertible 2-cell
${\LeftAdjoint{\varphi}}:{\LeftAdjoint{F}}\Rightarrow{\LeftAdjoint{G}}$
in the bicategory~$\SpanCat$,
%
but to a \emph{cospan} of 2-cells from $\LeftAdjoint{F}$ and $\LeftAdjoint{G}$ 
to the span $\widetilde{L}_{\varphi}$:
\begin{center}
\begin{tikzcd}[row sep=0pt,column sep=2.5em]
\Acategory
\arrow[rrr,span,"{L_F}"{near start},""{swap,name=sourceup}, bend left = 60]
\arrow[rrr,span,"{\widetilde{L}_{\varphi}}"{near start},""{name=targetup},""{name=targetdown,swap}]
\arrow[rrr,span,"{L_G}"{swap,near start},""{name=sourcedown}, bend right = 60]
&&&
\Bcategory
\arrow[Rightarrow,swap,from=sourceup,to=targetup,shorten <=3pt, shorten >=3pt,"{inl}"{swap}]
\arrow[Rightarrow,swap,from=sourcedown,to=targetdown,shorten <=3pt, shorten >=3pt,"{inr}"]
\end{tikzcd}
\quad defined as follows \quad
\begin{tikzcd}[row sep=1em,column sep=1em]
&&
A\arrow[dd,"{inl}"]\arrow[rrdd,"F"]\arrow[lldd,"{id_A}"{swap}]
\\
\\
A
&&
Cyl(A)\arrow[ll,"proj"{swap,near start}]\arrow[rr,"{\varphi}"]
&&
B
\\
\\
&&
A\arrow[uu,"{inr}"{swap}]\arrow[rruu,"G"{swap}]\arrow[lluu,"{id_A}"]
\end{tikzcd}
\end{center} 
Here, the span $\widetilde{L}_{\varphi}$ is defined as $(Cyl(A),proj,\varphi)$
where $Cyl(A)$ denotes the \emph{cylinder category}
defined as the cartesian product $Cyl(\Acategory)\, = \,\IntGpd\times \Acategory$
of the small category~$\Acategory$ with the \emph{interval groupoid} $\IntGpd$ 
with two objects~$0$ and~$1$
and a unique isomorphism $0\to 1$ between them.
The functors $inl$, $inr$ and $proj$ are defined as
$inl (a) := (0,a)$, $inr (a) := (1,a)$ and $proj(i,a):=a$ for $i\in\{0,1\}$,
while the functor $\varphi:\IntGpd\times A\to B$ provides
an alternative description of the natural isomorphism~$\varphi$
with $\varphi(0,a)=Fa$, $\varphi(1,a)=Ga$, $\varphi(0\to 1,a)=\varphi_a$.

An additional observation made in~\cite{Mellies19lics}
is that the functors $inl,inr:\Acategory\to Cyl(\Acategory)$ 
are fully faithful and essentially surjective, and thus equivalences of categories.
This strongly suggests to \emph{localize} (or formally invert)
in the hom-category~$\SpanCat(\Acategory,\Bcategory)$
every morphism given by a simulation 2-cell~\eqref{equation/2-cell-between-spans}
whose underlying functor $\theta:\Scategory\to\Tcategory$ is an equivalence of categories.
%
%
This localization can be neatly performed using the tools of abstract
homotopy theory, and in particular the notion of Quillen model category.
Recall \cite{JoyalTierney1991,Rezk1996} that the category~$\Cat$ comes equipped the natural Quillen
structure $(\mathcal{W}, \mathcal{C}, \mathcal{F})$ defined as follows:
\begin{itemize}
\item the class $\mathcal{W}$ of \emph{weak equivalence} is the class of equivalences of categories,
\item the class $\mathcal{C}$ of \emph{cofibrations} is the class of functors injective on objects,
\item the class $\mathcal{F}$ of \emph{fibrations} is the class of isofibrations --
    recall that a functor $F:\Acategory\to\Bcategory$ is an isofibration if for
    any object $a\in \Acategory$ and any isomorphism $g$ with target
    $F(a)$ in $\Bcategory$, there exists an isomorphism $f$
    with target $a$ in $\Acategory$ such that $g=F(f)$.
\end{itemize}
From this follows \cite{Hirschhorn03} that every hom-category $\SpanCat(\Acategory,\Bcategory)=\Cat/\Acategory\times\Bcategory$ 
comes equipped as a slice category with a Quillen model structure 
$(\mathcal{W}_{\Acategory,\Bcategory}, \mathcal{C}_{\Acategory,\Bcategory}, \mathcal{F}_{\Acategory,\Bcategory})$:
\begin{itemize}
\item a weak equivalence
in $\mathcal{W}_{\Acategory,\Bcategory}$
is a simulation 2-cell~$\theta:\spanS\Rightarrow\spanT$ whose underlying functor~$\theta:\Scategory\to\Tcategory$
is an element of $\mathcal{W}$, that is, an equivalence of categories;
\item a cofibration in $\mathcal{C}_{\Acategory,\Bcategory}$ is a simulation 2-cell~$\theta:\spanS\Rightarrow\spanT$
whose underlying functor~$\theta:\Scategory\to\Tcategory$ is an element of~$\mathcal{C}$, that is,
a functor injective on objects; 
\item a fibration in $\mathcal{F}_{\Acategory,\Bcategory}$
is a simulation 2-cell~$\theta:\spanS\Rightarrow\spanT$ whose underlying functor~$\theta:\Scategory\to\Tcategory$ is an element of~$\mathcal{F}$, that is,
a isofibration.
\end{itemize}
This defines what we like to call the natural model structure on $\SpanCat(\Acategory,\Bcategory)=\Cat/A\times B$.
The category obtained by localizing in $\SpanCat(\Acategory,\Bcategory)$
all weak equivalences in $\mathcal{W}_{\Acategory,\Bcategory}$ is then
the homotopy category
$\HoSpanCat(\Acategory,\Bcategory)=\SpanCat(\Acategory,\Bcategory)[\mathcal{W}_{\Acategory,\Bcategory}^{-1}]$
associated to this natural Quillen model structure.
%
One important point is that this category $\HoSpanCat(\Acategory,\Bcategory)$
cannot be easily described by applying a purely 2-categorical approach:
note in particular that although the weak equivalences $inl,inr:\Acategory\to Cyl(\Acategory)$ 
are adjoint equivalences in the 2-category~$\Cat$, they are not adjoint equivalences
in the 2-category $\SpanCat(\Acategory,\Bcategory)$
of functorial spans over $\Acategory$ and $\Bcategory$.


\paragraph{Morphisms in a Quillen model category.} 
We follow a well-established tradition in homotopy theory
and draw every cofibration 
$\morphism*[c]{A}[cof]{B}$
and every fibration 
$\morphism*[f]{A}[fibrant]{B}$.
We also indicate with a sign $\sim$ that a morphism~$w:A\to B$ is a weak equivalence.
Typically, an \emph{acyclic cofibration}
$\morphism*[v]{A}[acycof]{B}$
is defined as an element of $\mathcal{W}\cap\mathcal{C}$,
that is, a weak equivalence which is at the same time a cofibration;
and an \emph{acyclic fibration} 
$\morphism*[w]{A}[acyfib]{B}$
is an element of $\mathcal{W}\cap\mathcal{F}$.
By way of illustration, we could rewrite 
the Seely equivalence~\eqref{equation/seely-equivalence-1}
as
\begin{equation}\label{equation/seely-equivalence-1-homotopy}
\begin{tikzcd}[column sep=2.5em]
\deshuffle_{\Acategory,\Bcategory}\,:\,
\Sym\,(\Acategory+\Bcategory)
\,\,
\arrow[rr,->>,"{\sim}",yshift=.35em]
&&
\,\,
\Sym\,\Acategory\,\times\,\Sym\,\Bcategory
\,:\,\concat_{\Acategory,\Bcategory}
\arrow[ll,>->,"{\sim}",yshift=-.35em]
\end{tikzcd}
\end{equation}
in order to stress the fact that $\concat_{\Acategory,\Bcategory}$ is an acyclic cofibration
and that $\deshuffle_{\Acategory,\Bcategory}$ is an acyclic fibration
in the natural Quillen model structure on~$\Cat$.
Indeed, we have already stated that $\concat_{\Acategory,\Bcategory}$
and $\deshuffle_{\Acategory,\Bcategory}$ form an equivalence
of categories,
and the injectivity of $\concat_{\Acategory,\Bcategory}$
on objects is obvious.
The fact that $\deshuffle_{\Acategory,\Bcategory}$ is an isofibration
can be deduced from the fact that
$\deshuffle_{\onecategory,\onecategory}$
is an isofibration for purely combinatorial reasons,
and the fact that isofibrations are preserved by pullbacks
-- here we also apply the fact that $\deshuffle$ is a cartesian
natural 2-transformation, to the specific case of the unique
functors $F:\Acategory\to\onecategory$ and $G:\Bcategory\to\onecategory$
to the terminal category.

An object~$A$ is called \emph{cofibrant} when the unique morphism~$0\to A$
from the initial object~$0$ is a cofibration, and \emph{fibrant} when the unique morphism~$A\to 1$
to the terminal object~$1$ is a fibration. 
In a Quillen model category, every morphism $f:A\to B$ can be factored (in a non necessarily unique way)
as an acyclic cofibration followed by a fibration
\begin{tikzcd}[cramped]
    f \, : \, A\arrow[r,acycof] & A' \arrow[r,fibrant] &B
\end{tikzcd},
and as a cofibration followed by an acyclic fibration
\begin{tikzcd}[cramped]
    f \, : \, A\arrow[r,cof] & B' \arrow[r,acyfib] & B
\end{tikzcd}.
When $B=1$ and $f:A\to 1$ is the unique morphism to the terminal object, 
any object~$A'$ involved in a factorization of the first form
is called a \emph{fibrant replacement} of~$A$.
Note that an object~$A$ may have several (non necessarily isomorphic) fibrant replacements.
%

\paragraph{The homotopy category~$\HoSpanCat(\Acategory,\Bcategory)$ of spans.}
One advantage of applying the ideas of abstract homotopy theory 
is that it provides a nice description of the homotopy category 
$\mathbf{Ho}\,\Ccategory=\Ccategory[\mathcal{W}^{-1}]$
obtained by localizing (= formally inverting) all the weak equivalences
of a Quillen model category~$\Ccategory$ with class $\mathcal{W}$ of weak equivalences,
class $\mathcal{C}$ of cofibrations and class $\mathcal{F}$ of fibrations.
The homotopy category $\mathbf{Ho}\,\Ccategory=\Ccategory[\mathcal{W}^{-1}]$ 
is the category of fibrant and cofibrant objects in $\Ccategory$,
whose morphisms are defined as the morphisms of the original category $\SpanCat(\Acategory,\Bcategory)$
considered modulo a specific homotopy relation, see \cite{Hovey99,Hirschhorn03} for details.
In the particular case of the homotopy category
$\HoSpanCat(\Acategory,\Bcategory):=\SpanCat(\Acategory,\Bcategory)[\mathcal{W}_{\Acategory,\Bcategory}^{-1}]$
the characterization of the fibrant and cofibrant objects of $\SpanCat(\Acategory,\Bcategory)$ goes as follows.
Every small category in 
$\Cat$
is both fibrant and cofibrant.
The situation is a bit different in the general case of the natural Quillen model structure
on $\SpanCat(\Acategory,\Bcategory)=\Cat/A\times B$.
In that case, every span $\spanS=(\Scategory,\sourcefunctor,\targetfunctor)$ is cofibrant, because the initial object $ 0_{\Acategory, \Bcategory} $ in $\SpanCat(\Acategory,\Bcategory)  $ has source $ 0 $ and thus the unique morphism  $ 0_{\Acategory, \Bcategory} \to \spanS $ is injective on objects.
On the other hand, a span is fibrant precisely when its canonical pairing functor
\begin{equation}\label{equation/pairing-functor}
\begin{tikzcd}[column sep = 2em] 
\langle\sourcefunctor,\targetfunctor\rangle \quad : \quad \Scategory \arrow[rr,->>] && \Acategory\times\Bcategory
\end{tikzcd}
\end{equation}
is an isofibration, because the terminal object $ 1_{\Acategory, \Bcategory} $ in $ \SpanCat(\Acategory,\Bcategory) $ is the identity functor $  \Acategory \times \Bcategory \to \Acategory \times \Bcategory   $ and the unique morphism $ \spanS \to 1_{\Acategory, \Bcategory}   $ coincides with (\ref{equation/pairing-functor}).
%
 
%
%
The category $\HoSpanCat$ 
can be thus described as follows.
Its objects are the spans $\spanS=(\Scategory,\sourcefunctor,\targetfunctor)$
whose pairing functor~\eqref{equation/pairing-functor} is an isofibration,
and its morphisms are the simulation 2-cells~\eqref{equation/2-cell-between-spans}
considered up to the following homotopy relation:
%
$\varphi,\psi:\spanS\Rightarrow\spanT$ are declared
\emph{homotopic up to a natural isomorphism},
and we write $\varphi\sim_{\Acategory,\Bcategory}\psi$,
when there exists
a natural isomorphism $\alpha:\varphi\Rightarrow\psi$ between the underlying functors~$\varphi, \psi:\Scategory\to\Tcategory$
of the simulation 2-cells, which is \emph{vertical} in $\Cat/\Acategory\times\Bcategory$, in the expected sense that
\begin{itemize}
\item the natural isomorphism $\sourcefunctor_{\Tcategory}\circ\alpha$ is the identity
on $\sourcefunctor_{\Scategory}$,
\item the natural isomorphism $\targetfunctor_{\Tcategory}\circ\alpha$ is the identity
on $\targetfunctor_{\Scategory}$.
\end{itemize}
The fact that every small category is fibrant in the small category~$\Cat$ ensures
that the composition $\spanS\circ\spanR$ of spans
$\spanR\in\SpanCat(\Acategory,\Bcategory)$ and
$\spanS\in\SpanCat(\Bcategory,\Ccategory)$, given by 
the pullback
\begin{equation}\label{equation/span-composition}
\begin{tikzcd}[column sep=2em, row sep=1ex]
&&
&& 
\Rcategory\times_{\Bcategory}\Scategory
\arrow[lldd,->>,"{\pi_1}"{swap}]
\arrow[rrdd,->>,"{\pi_2}"]
&&
\\
\\
&&
\Rcategory
\arrow[rrdd,->>,"\Rtargetfunctor"{swap}]
\arrow[lldd,->>,"\Rsourcefunctor"{swap}]
\arrow[rrrr,phantom,"{\mathrm{pullback}}"{description}]
&&&&
\Scategory
\arrow[lldd,->>,"\Ssourcefunctor"]
\arrow[rrdd,->>,"\Stargetfunctor"]
\\
\\
\Acategory
&&
&&
\Bcategory
&&
&&
\Ccategory
\end{tikzcd}
\end{equation}
in $\Cat$, is fibrant as soon as $\spanR$ and $\spanS$ are.
The homotopy relations between simulation 2-cells are moreover preserved.
In this way, one obtains a composition functor
\[
\begin{tikzcd}[column sep=2em,row sep=3em]
\circ_{\Acategory,\Bcategory,\Ccategory}
\quad : \quad 
\HoSpanCat({\Bcategory},{\Ccategory})
\times
\HoSpanCat({\Acategory},{\Bcategory})
\arrow[rr]
&&
\HoSpanCat({\Acategory},{\Ccategory})
\end{tikzcd}
\]
equipped with the appropriate associator and unitors required 
to turn the family of homotopy categories~$\HoSpanCat({\Acategory},{\Bcategory})$
into a bicategory~$\HoSpanCat$.
The homotopy construction has the beneficial effect of resolving the obstruction
discussed above, and to define a homotopy variant~$\HoSpanCat$ 
of the bicategory~$\SpanCat$ equipped with a (weak) 2-functor 
\begin{center}
\begin{tikzcd}
\CatEq \arrow[rr] && \HoSpanCat
\end{tikzcd}
\end{center}
from the 2-category~$\CatEq$ obtained by restricting the 2-category~$\CatNat$
to its invertible 2-cells, \emph{i.e.} to natural isomorphisms.
This is sufficient to transport to the bicategory~$\HoSpanCat$
the Seely equivalences living in~$\Cat$.
Following this approach,
Melli{\`e}s obtains a denotational model of \differential linear logic
where every formula~$A$ is interpreted as a small category~$\relsem{A}$
and where every derivation tree 
$\SomeProof{\pi}{A_1 \, , \, A_2 \, , \, \dots \, , \, A_n}$
is interpreted as an isofibration
\begin{equation}\label{equation/isofibration}
\begin{tikzcd}
\relsem{\pi}  \arrow[rr,->>] && \relsem{A_1}\times\cdots\times\relsem{A_n}.
\end{tikzcd}
\end{equation}
%
%
Moreover, the inherently homotopic nature of the construction of $\HoSpanCat$ 
means that the interpretations of proofs $\relsem{\pi}$ provides 
a \emph{homotopy invariant} with respect to cut-elimination.
In other words, two proofs $\pi$ and $\pi'$ equal modulo cut-elimination
have isomorphic interpretations $\relsem{\pi}$ and $\relsem{\pi'}$ in 
the bicategory $\HoSpanCat$,
which means that their interpretations are related by equivalences of categories
(the notion of homotopy in the natural Quillen model structure)
instead of categorical isomorphisms, see~\cite{Mellies19lics} for a discussion.

\paragraph{Canonical fibrant replacements.}

We present the construction of a canonical fibrant replacement of any functor.
Given $ \begin{tikzcd} B \arrow[r,->,"F"] & A \end{tikzcd}$,
we define the \emph{canonical fibrant replacement} of $F$ as follows.
First, we define a category $ \Iso{B} $, whose objects consist of pairs of one object $b$ of $B$ together with an isomorphism $ \alpha : a \to F(b) $ in $A$.
A morphism from $ (b, \alpha) $ to $ (b', \alpha') $ consists of a morphism $ \beta : b \to b' $ and a morphism $ \gamma : a \to a'$ making the following diagram commute
\[\begin{tikzcd}
	a && {a'} \\
	{F(b)} && {F(b')}
	\arrow["\gamma", from=1-1, to=1-3]
	\arrow["\alpha"', from=1-1, to=2-1]
	\arrow["{\alpha'}", from=1-3, to=2-3]
	\arrow["{F(\beta)}", from=2-1, to=2-3]
\end{tikzcd}\,.\]
Note that such a morphism \((\beta,\gamma)\) is in fact uniquely determined by \(\beta\)
because \(\alpha\) and \(\alpha'\) are isomorphisms.
Then, we define \[ \begin{tikzcd}
\Iso{B} \arrow[rr,->>,"\Iso{F}"] && A 
\end{tikzcd} \] by the maps
\begin{align*}
 (b, \alpha : \ty \to F(b)) \qquad \qquad & \mapsto \qquad \qquad \ty 
\\
 (\beta:b\to b', \gamma:a\to a') \qquad \qquad & \mapsto \qquad \qquad \gamma\,.
\end{align*}
The functor \(\Iso{F}\) is indeed an isofibration:
given any isomorphism \(g:a'\to a=\Iso{F}(b,\alpha)\)
we immediately obtain \(\Iso{F}{(\identity{b},g)}=g\).
We moreover obtain an acyclic cofibration \(\morphism*[\iota_F]{B}[acycof]{\Iso{B}}\)
by the inclusion functor sending each object \(b\) to the pair \((b,\identity{F(b)})\):
it is obviously injective of objects and essentially surjective;
it is also fully faithful, because a morphism \((\beta,\gamma)\) in \(\Iso{B}\) is
uniquely determined by \(\beta\).
Conversely, we obtain an acyclic fibration \(\morphism*[\varepsilon_F]{\Iso{B}}[acyfib]{B}\)
by the projection functor sending \((b,\alpha)\) to \(b\) and \((\beta,\gamma)\) to \(\beta\).
We have \(\varepsilon_F\circ\iota_F=\identity{B}\), so that \(\varepsilon_F\) is surjective on objects,
and \(\varepsilon_F\) is fully faithful because every \(\beta:b\to b'\) in \(B\)
determines a unique morphism \((\beta,\alpha'^{-1}\circ F(\beta)\circ\alpha)\)
from \((b,\alpha)\) to \((b',\alpha')\) in \(\Iso{B}\):
hence \(\varepsilon_F\) is an equivalence of categories.
It is moreover an isofibration:
given any isomorphism \(\beta:b'\to b=\varepsilon_F(b,\alpha)\),
the pair \((\beta,\identity{a})\) is an isomorphism
from \((b',F(\beta)^{-1}\circ\alpha)\) to \((b,\alpha)\) in \(\Iso{B}\),
mapped to \(\beta\) by \(\varepsilon_F\).

Note that \(\iota_F\) is a morphism over \(A\),
from \(F\) to \(\Iso{F}\),
\ie\ \(\Iso{F}\circ\iota_F=F\).
On the other hand, \(\varepsilon_F\) is \emph{not}
a morphism from \(\Iso{F}\) to \(F\) in general:
we only have a natural isomorphism
\(\Iso{F}\Rightarrow F\circ\varepsilon_F\),
whose component at \((b,\alpha)\) is \(\alpha\),
naturality being the commutation of the square defining
the morphisms of \(\Iso{B}\).

\paragraph{An explicit shuffle construction of the span homotopy model.}
The original presentation of the span homotopy model of linear logic by
Melli{\`e}s~\cite{Mellies19lics} relied extensively on fibrant objects and
fibrant replacements in $\SpanCat(\Acategory,\Bcategory)$.
Here, we follow alternative recipe to construct an equivalent, but more direct
and explicit interpretation of linear logic.

We declare that a span $\spanS$ is \emph{separately fibrant} 
when the two morphisms $\sourcefunctor{}$ and $\targetfunctor{}$
are isofibrations, that is, fibrations in the natural Quillen model structure
of~$\Cat$.
Note that every span fibrant in~$\SpanCat(\Acategory,\Bcategory)$
is separately fibrant: this is a formal consequence of the fact that every small category is fibrant in~$\Cat$.
On the other hand, the notion of separately fibrant span is strictly more permissive:
the diagonal \(\Delta:\IntGpd\to\IntGpd\times\IntGpd\)
is obviously separately fibrant (each component is the identity on \(\IntGpd\)),
whereas it is not fibrant.

The full subcategory of separately fibrant spans in~$\SpanCat(\Acategory,\Bcategory)$
is noted~$\FibSpanCat(\Acategory,\Bcategory)$.
Note that separately fibrant spans compose horizontally as separately fibrant spans,
and thus define a sub-bicategory $\FibSpanCat$ of the bicategory $\SpanCat$.
%
%
The key idea of our explicit construction is to exploit the following property:
\begin{proposition}
Suppose that $\varphi:\spanR\Rightarrow\spanR'$ and $\psi:\spanS\Rightarrow\spanS'$
are weak equivalences between separately fibrant spans
\begin{center}
\begin{tikzcd}[column sep=1.5em,row sep=.4em]
&&
\Rcategory
\arrow[dddd,"{\varphi}","\sim"{swap}]
\arrow[lldd,->>] 
\arrow[rrdd,->>]  
&&&&
\Scategory
\arrow[dddd,"{\psi}","\sim"{swap}] 
\arrow[lldd,->>] 
\arrow[rrdd,->>] 
\\
\\
\Acategory
&&&& 
\Bcategory
&&&& \Ccategory
\\
\\
&& 
\Rcategory'
\arrow[lluu,->>] 
\arrow[rruu,->>] 
&&&&
\Scategory'
\arrow[lluu,->>] 
\arrow[rruu,->>] 
\end{tikzcd}
\end{center}
In that case, the composite 2-cell $\psi\circ\varphi:\spanS\circ\spanR\Rightarrow\spanS'\circ\spanR'$
is a weak equivalence.
\end{proposition}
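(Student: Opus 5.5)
The composite cell $\psi\circ\varphi$ is the functor $\varphi\times_{\Bcategory}\psi:\Rcategory\times_{\Bcategory}\Scategory\to\Rcategory'\times_{\Bcategory}\Scategory'$ induced on pullbacks. The argument rests on a standard fact: a pullback along an isofibration is a homotopy pullback in the natural model structure, because every object of $\Cat$ is fibrant and so the model structure is right proper. Since $\varphi$ and $\psi$ commute with the legs over $\Bcategory$, the two cospans $\Rcategory\to\Bcategory\leftarrow\Scategory$ and $\Rcategory'\to\Bcategory\leftarrow\Scategory'$ are related by an objectwise weak equivalence (identity on $\Bcategory$). All four legs into $\Bcategory$ are isofibrations by separate fibrancy, and in a right proper model category the pullback of such cospans is then homotopy invariant. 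Hence $\varphi\times_{\Bcategory}\psi$ is a weak equivalence.

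To keep things elementary, I would instead factor the composite as $(\identity{\Rcategory'}\times_{\Bcategory}\psi)\circ(\varphi\times_{\Bcategory}\identity{\Scategory})$ and handle one side at a time. By symmetry it suffices to show that if $\varphi:\Rcategory\to\Rcategory'$ is an equivalence over $\Bcategory$ (meaning $t'\varphi=t$) and $s:\Scategory\to\Bcategory$ is an isofibration, then $\varphi\times_{\Bcategory}\identity{\Scategory}:\Rcategory\times_{\Bcategory}\Scategory\to\Rcategory'\times_{\Bcategory}\Scategory$ is an equivalence.

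Full faithfulness is direct. A morphism $(r_1,x_1)\to(r_2,x_2)$ in the pullback is a pair $(f,g)$ with $t(f)=s(g)$. Since $\varphi$ is fully faithful and $t'\varphi=t$, the morphisms $f'$ of $\Rcategory'$ between images correspond bijectively to morphisms $f$, with $t'(f')=t(f)$. So compatible pairs correspond bijectively.

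Essential surjectivity is the main obstacle, and it is exactly where separate fibrancy is needed. Take an object $(r',x)$ with $t'(r')=s(x)$. Choose $r\in\Rcategory$ and an isomorphism $u:\varphi(r)\to r'$. Then $t'(u):t(r)\to s(x)$ is an isomorphism in $\Bcategory$. Since $s$ is an isofibration, lift $t'(u)^{-1}$, whose target is $t(r)$, to an isomorphism $v:x_0\to x$ in $\Scategory$ with $s(v)=t'(u)$. Then $s(x_0)=t(r)$, so $(r,x_0)$ lies in $\Rcategory\times_{\Bcategory}\Scategory$, and $(u,v)$ is an isomorphism $(\varphi(r),x_0)\cong(r',x)$ in the pullback, since $t'(u)=s(v)$. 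The only care needed is lining up directions with the isofibration definition, which lifts isomorphisms with a given target; applying it to inverses handles this.

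Applying the symmetric argument to $\psi$, now using that the source leg $\Rcategory'\to\Bcategory$ is an isofibration, and composing the two equivalences yields the result. The outer legs to $\Acategory$ and $\Ccategory$ play no role.
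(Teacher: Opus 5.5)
Your proof is correct, and its core matches the paper's. Full faithfulness is read off from the description of morphisms of a pullback as compatible pairs. Essential surjectivity comes from lifting an isomorphism of $\Bcategory$ along an isofibration leg.

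The difference is in the organization:
\begin{itemize}
\item The paper handles $\varphi\times_{\Bcategory}\psi$ at once and only sketches essential surjectivity, appealing to the separate fibrancy of $\spanR$ and $\spanS$.
\item You factor it as $(\identity{\Rcategory'}\times_{\Bcategory}\psi)\circ(\varphi\times_{\Bcategory}\identity{\Scategory})$. Each factor is then a one-sided whiskering needing exactly one lift: along $\Scategory\to\Bcategory$ for the first, and along $\Rcategory'\to\Bcategory$ for the second. This makes the verification modular and fully explicit.
\end{itemize}

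Treating the composite in one go, as the paper does, shows slightly more. Write $s'$ for the leg $\Scategory'\to\Bcategory$. Given $(r',x')$, pick isomorphisms $u:\varphi(r)\to r'$ and $v:\psi(x)\to x'$. Lift $s'(v)^{-1}\circ t'(u):t(r)\to s(x)$ along $s$ to some $y:x_0\to x$. Then $(u,\,v\circ\psi(y))$ is an isomorphism from the image of $(r,x_0)$ to $(r',x')$. So only the single leg $s$ of $\spanS$ needs to be an isofibration.

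Your opening remark via right properness, using pullbacks along fibrations as homotopy pullbacks, is also a valid and more conceptual route. It works in any right proper model category.

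Two small slips, neither affecting the argument:
\begin{itemize}
\item In the lifting step you should lift $t'(u)$ itself, whose target is $s(x)$ as the isofibration property requires, rather than its inverse. Your stated conclusion $s(v)=t'(u)$ is exactly what the direct lift gives.
\item $\Rcategory'\to\Bcategory$ is the target leg of $\spanR'$, not its source leg.
\end{itemize}
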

\begin{proof}
    We first recall that the horizontal composite $ \psi \circ \varphi $ is defined exploiting the universal property of the pullback $ R'\times_B S' $:
     \[\begin{tikzcd}
	&& {R \times_B S} && \\
	& R && S \\
	A &&&& C \\
	& {R'} && {S'} \\
	 && {R'\times_BS'} && 
	\arrow[from=1-3, to=2-2]
	\arrow[from=1-3, to=2-4]
	\arrow["{\varphi \times_B \psi}"{description}, dashed, from=1-3, to=5-3]
	\arrow[from=2-2, to=3-1]
	\arrow["\varphi"{description}, from=2-2, to=4-2]
	\arrow[from=2-4, to=3-5]
	\arrow["\psi"{description}, from=2-4, to=4-4]
	\arrow[from=4-2, to=3-1]
	\arrow[from=4-4, to=3-5]
	\arrow[from=5-3, to=4-2]
	\arrow[from=5-3, to=4-4]
\end{tikzcd} \] 
Hence, for $ (r,s) \in R \times_B S $, we have that $ \varphi \times_B \psi $ is defined pointwise $ (\varphi \times_B \psi)  \thinspace (r,s) = (\varphi(r), \psi(s))  $. Since the pullback categories are subcategories of the cartesian product, and $ \varphi$ and $ \psi$ are separately weak equivalences, it follows that $ \varphi \times_B \psi  $ is fully-faithful. Since $ \mathcal{R}$ and $ \mathcal{S}$ are separately fibrant and pullbacks preserve isofibrations, we can infer that $ \varphi \times_B \psi$ is  also essentially surjective.
\end{proof}
The result means that the composition of separately fibrant spans
-- that is, the horizontal composition in~$\FibSpanCat$ --
preserves weak equivalences on the left and on the right.
For that reason, it is convenient to describe the bicategory $\HoSpanCat$
as the bicategory $\FibSpanCat$ where every hom-category~$\FibSpanCat(\Acategory,\Bcategory)$
is localized with respect to weak equivalences.
Note that two weak equivalences $\varphi$ and $\psi$ do not necessarily compose horizontally
as a weak equivalence~$\psi\circ\varphi$ in the bicategory~$\SpanCat$
when the spans $\spanR$, $\spanR'$, $\spanS$, $\spanS'$ are not separately fibrant.
%

This observation means that we can perform homotopic constructions 
in the bicategory~$\HoSpanCat$
by taking general spans in $\SpanCat(\Acategory,\Bcategory)$
and by replacing them 
by their \emph{separately fibrant replacements} in~$\FibSpanCat$
instead of their fibrant replacements in~$\HoSpanCat$.
Here, we define a \emph{separately fibrant replacement} of a span~$\spanR$
from $\Acategory$ to $\Bcategory$
as a separately fibrant span~$\spanS$
together with an acyclic cofibration~$\theta:\spanR\Rightarrow\spanS$, as depicted below:
\begin{center}
\begin{tikzcd}[column sep=1.5em,row sep=.5ex]
&&
\Rcategory
\arrow[dddd,>->,"{\theta}","\sim"{swap}] 
\arrow[lldd,->>] 
\arrow[rrdd,->>] 
\\
\\
\Acategory
&&&& 
\Bcategory\,.
\\
\\
&&
\Scategory
\arrow[lluu,->>] 
\arrow[rruu,->>] 
\end{tikzcd}
\end{center}
%
Our main contribution in the present paper is to show 
that the two canonical functors
\begin{center}
\begin{tabular}{c}
\begin{tikzcd}[column sep=1.2em]
\otimes \quad : \quad \Sym\,\Acategory\,\times\,\Sym\,\Acategory
\arrow[rrrr,"{}"]
&&&&
\Sym\,\Acategory 
\end{tikzcd}
\\
\begin{tikzcd}[column sep = 3em]
\distrib_{\Sym{\Bcategory},\Acategory}
\quad : \quad 
\Sym{\,(\Sym\,\Bcategory \times \Acategory)}
\arrow[rr]
&&
\Sym{\,\Bcategory}\times\Sym{\Acategory}
\end{tikzcd}
\end{tabular}
\end{center}
underlying the \emph{contraction rule} and the \emph{promotion rule} of linear logic 
have very simple separately fibrant replacements,
while their fibrant replacements are more complicated.

\begin{figure*}
    \[
        A, B ::= \, 0 \mid \top \mid 1 \mid \bot \mid A\oplus B \mid A \with B 
        \mid A\otimes B \mid A\invamp B \mid {!\,A} \mid {?\, A} \mid X \mid X^{\perp}
    \]
      \begin{gather*}
        \AxiomC{}
        \RightLabel{~axiom}
        \UnaryInfC{$\vdash \, A^{\bot} \, , \, A$}
        \DisplayProof
        \qquad
        \AxiomC{$\vdash \, \Gamma \, , \, A$}
        \AxiomC{$\vdash \, \Delta \, , \, A^{\bot}$}
        \RightLabel{~cut}
        \BinaryInfC{$\vdash \, \Gamma \, , \, \Delta$}
        \DisplayProof
        \qquad
        \AxiomC{$\vdash \, \Gamma \, , \, A \, , \, B \, ,\, \Delta$}
        \RightLabel{~exch}
        \UnaryInfC{$\vdash \, \Gamma \, , \, B \, , \, A \, ,\, \Delta$}
        \DisplayProof
        \\[1ex]
        \AxiomC{$\vdash \, \Gamma \, , \, A$}
        \AxiomC{$\vdash \, \Delta \, , \, B$}
        \RightLabel{~$\otimes$}
        \BinaryInfC{$\vdash \, \Gamma \, , \, \Delta \, , \, A\otimes B$}
        \DisplayProof
        \qquad
        \AxiomC{$\vdash \, \Gamma \, , \, A \, , \, B$}
        \RightLabel{~$\parr$}
        \UnaryInfC{$\vdash \, \Gamma \, , \, A\parr B$}
        \DisplayProof
        \qquad
        \AxiomC{}
        \RightLabel{\quad $\one$}
        \UnaryInfC{$\vdash~\one$}
        \DisplayProof
        \qquad
        \AxiomC{$\vdash \, \Gamma $}
        \RightLabel{~$\bot$}
        \UnaryInfC{$\vdash \, \Gamma \, , \, \bot$}
        \DisplayProof
        \\[1ex]
        \AxiomC{$\vdash \, \Gamma \, , \, A$}
        \AxiomC{$\vdash \, \Gamma \, , \, B$}
        \RightLabel{~$\with$}
        \BinaryInfC{$\vdash \, \Gamma\, , \, A\with B$}
        \DisplayProof
        \qquad
        \AxiomC{$\vdash \, \Gamma \, , \, A_i$}
        \RightLabel{~$\oplus_i$}
        \UnaryInfC{$\vdash \, \Gamma \, , \, A_1\oplus A_2$}
        \DisplayProof
        \qquad
        \AxiomC{}
        \RightLabel{~$\top$}
        \UnaryInfC{$\vdash\,\Gamma\,,\,\top$}
        \DisplayProof
        \\[1ex]
        \AxiomC{$\vdash \, \wn \Gamma \, , \, A$}
        \RightLabel{~$\oc$}
        \UnaryInfC{$\vdash \, \wn \Gamma \, , \, \oc A$}
        \DisplayProof
        \qquad
        \AxiomC{$\vdash \, \Gamma \, , \, A$}
        \RightLabel{~$\wn$}
        \UnaryInfC{$\vdash \, \Gamma \, , \, \wn A$}
        \DisplayProof
        \qquad
        \AxiomC{$\vdash \, \Gamma \, , \, \wn A \, , \, \wn A$}
        \RightLabel{~contr}
        \UnaryInfC{$\vdash \, \Gamma \, , \, \wn A$}
        \DisplayProof
        \qquad
        \AxiomC{$\vdash \, \Gamma $}
        \RightLabel{~weak}
        \UnaryInfC{$\vdash \, \Gamma \, , \, \wn A$}
        \DisplayProof
    \end{gather*}
    \caption{Formulae and rules of Linear Logic.}  \hrulefill
    \label{figure/LL}  
\end{figure*}

We find convenient to introduce the following notations.
Given small category \(\Xcategory\) and a list \(\Gamma=\Acategory_1,\dotsc,\Acategory_n\) of small categories,
we denote by \(\morphism*[\tysf]{\Xcategory}[spanmaps]{\Gamma}\) any 
\(n\)-fold span:
that is given by a family of functors
\(\morphism*[f_i]{\Xcategory}{\Acategory_i}\) for \(1\le i\le n\)
-- this induces a functor to the product \(\prod\Gamma=\Acategory_1\times\cdots\times\Acategory_n\)
by tupling \(\morphism*[\tuple{f_1,\dotsc,f_n}]{\Xcategory}{\prod\Gamma}\).
We generalize the notion of separately fibrant spans to such
\(n\)-fold spans: we say \(\tysf\) is separately fibrant,
and we write \(\morphism*[\tysf]{\Xcategory}[sepfib]{\Gamma}\),
when each \(f_i\) is a fibration.

Now, instead of a fibration
\(\morphism*{\relsem{\pi}}[fibrant]{\relsem{A_1}\times\cdots\times\relsem{A_n}}\),
we will interpret a proof \(\SomeProof{\pi}{\Gamma=A_1,\dotsc,A_n}\) as a separately fibrant \(n\)-fold span:
\begin{equation}
    \label{equation/sepfibsem}
    \morphism{\relsem{\pi}}[sepfib]{\relsem{\Gamma}=\relsem{A_1},\dotsc,\relsem{A_n}}\,.
\end{equation}
Note that one can always recover a fibration from a proof:
setting \(A=A_1\parr\cdots\parr A_n\) and
\(\SomeProof{\pi^\parr}{A}\) the proof obtained from \(\pi\)
by iterating the \((\parr)\) rule,
the separately fibrant span \(\relsem{\pi^\parr}\) consists of a single fibration:
\begin{equation}
    \label{equation/fibrantsem}
    \morphism{\relsem{\pi^\parr}}[fibrant]{\relsem{A}}\,.
\end{equation}
In a sense, distinguishing between \eqref{equation/sepfibsem}
and \eqref{equation/fibrantsem} is precisely what allows us
to provide a very simple and concrete interpretation of proofs.

\section{Interpretation of linear logic}\label{section/LL}
We explain how to interpret the formulae and proofs of classical linear logic 
which we describe in the usual syntactic way using a (one sided) sequent calculus,
that we recall in \cref{figure/LL}.
We suppose given an interpretation of each atomic formula~$X$ of linear logic
as a small category~$\sem{X}\in\Cat$.
The interpretation is then performed in two stages.
First, the formulae of classical linear logic
are interpreted by structural induction:
\[
\begin{array}{ccccc}
\sem{0} = \sem{\top} = \mathbf{0}
& \quad &
\sem{ A\oplus B} = \sem{ A\with B} = \sem{A}+\sem{B}
&\quad&
\sem{X^{\perp}}=\sem{X}
\\
\sem{1} = \sem{\bot} = \mathbf{1}
& \quad &
\sem{ A\otimes B} = \sem{ A\invamp B} = \sem{A}\times\sem{B}
&\quad&
\sem{{!\,A}}=\sem{{?\, A}}=\Sym\sem{A}
\end{array}
\]
so that each sequent \(\Gamma=A_1,\dotsc,A_n\) induces a tuple of small categories
which we simply write \(\sem{\Gamma}=\sem{A_1},\dotsc,\sem{A_n}\).
The interpretation of the derivation tree~$\pi$ is defined by structural induction,
by inspecting each rule of the sequent calculus.
    Note that the interpretation of the exchange rule (exch) is straightforward:  
    we let the same permutation act on the tuple of isofibrations.
From now on, we find more convenient and easier for the reader
to drop the semantic bracket~$\sem{-}$ from our notations.

%
%
\paragraph{Axiom.}
The axiom rule is simply interpreted as the identity span:
\begin{equation}\label{equation/id-span}
\begin{tikzcd}[column sep=5em]
    \Acategory
    & \Acategory \arrow[l,fibrant,"\id{}"{swap}] \arrow[r,fibrant,"\id{}"]
    & \Acategory
\end{tikzcd}
\end{equation}
It is important to note that although this span is separately fibrant
because each of its component is an isofibration,
it is in general not fibrant in the model category~$\HoSpanCat{(\Acategory,\Acategory)}$
because the diagonal functor $\Acategory\to \Acategory\times \Acategory$ itself is not an isofibration in general.
This interpretation thus substantially improves the original model~\cite{Mellies19lics}
where the axiom rule was interpreted as the fibrant replacement of the diagonal
defined by the category of isomorphisms of \(\Acategory\), equipped with is source and target functors.

\paragraph{Cut.}
Consider two proofs
$\SomeProof{\pi_1}{\Gamma \, , \, A}$
and
$\SomeProof{\pi_2}{\Delta \, , \, A^{\bot}}$
interpreted as separately fibrant spans
\begin{center}
\begin{tikzcd}[column sep = 1.8em]
\Gamma
&&
{{\pi_1}} \arrow[rr,fibrant] \arrow[ll,sepfib]
&&
{A}
\end{tikzcd}
\quad\quad\quad\quad
\begin{tikzcd}[column sep = 1.8em]
{A}
&&
{{\pi_2}} \arrow[ll,fibrant] \arrow[rr,sepfib]
&&
{\Delta}
\end{tikzcd}
\end{center}
The interpretation of the derivation tree $\pi$ obtained by cutting $\pi_1$ and $\pi_2$,
is obtained by computing the pullback \( \pi := \pi_1\times_A\pi_2 \) of the cospan
\begin{tikzcd}[cramped] \pi_1 \arrow[r,fibrant] & A & \pi_2 \arrow[l,fibrant] \end{tikzcd}
consisting of the two isofibrations on the cut formula.
Isofibrations are closed under pullbacks, and this thus yields a separately fibrant span
consisting of the two projection maps:
\begin{center}
\begin{tikzcd} \pi_1 && \pi_1\times_A\pi_2  \arrow[ll,fibrant,"{\fstproj}"{swap}]\arrow[rr,fibrant,"{\sndproj}"] && \pi_2 \end{tikzcd}
\end{center}
%
The desired interpretation is then obtained
\[\begin{tikzcd} 
\Gamma && \arrow[ll,sepfib] \pi_1 && \pi_1\times_A\pi_2  \arrow[ll,fibrant,"{\fstproj}"{swap}]\arrow[rr,fibrant,"{\sndproj}"] && \pi_2\arrow[rr,sepfib] && \Delta
\end{tikzcd}
\]
by postcomposing the projection maps
with the components 
of the separately fibrant spans \(\sem{\pi_1}\) and \(\sem{\pi_2}\).

\paragraph{Additives}
The interpretation of additives is easy and does not require any explicit fibrant replacement.
Consider two proofs
$\SomeProof{\pi_1}{\Gamma \, , \, A}$
and
$\SomeProof{\pi_2}{\Gamma \, , \, B}$
interpreted as separately fibrant spans
\begin{center}
\begin{tikzcd}[column sep = 1.8em]
\Gamma
&&
{{\pi_1}} \arrow[rr,->>] \arrow[ll,->>,"\circ" marking]
&&
{A}
\end{tikzcd}
\quad\quad\quad\quad
\begin{tikzcd}[column sep = 1.8em]
\Gamma
&&
{{\pi_2}} \arrow[rr,->>] \arrow[ll,->>,"\circ" marking]
&&
{B}
\end{tikzcd}
\end{center}
%
%
The interpretation of the derivation tree obtained by combining $\pi_1$ and $\pi_2$
using a with rule $(\with)$ is defined as the sum of the two interpretations
\begin{center}
\begin{tikzcd}
\Gamma
&&
{{\pi_1}+{\pi_2}} \arrow[rr,->>] \arrow[ll,->>,"\circ" marking]
&&
{A + B}
\end{tikzcd}
\end{center}
where the context maps from~$\pi_1+\pi_2$ to $\Gamma$ 
are defined by the universality property of the sum,
while the map from~$\pi_1+\pi_2$ to $A+B$ is defined
as the image of the two maps from $\pi_1$ to $A$
and from $\pi_2$ to $B$ by the sum functor.
The universal construction as well as the sum functor
preserve isofibrations, and the resulting span
is thus separately fibrant, as indicated in the diagram.

Similarly, the proof reduced to a $(\top)$ rule is interpreted 
as the separately fibrant span
\begin{center}
\begin{tikzcd}
\Gamma
&&
{\zerocategory} \arrow[rr,->>] \arrow[ll,->>,"\circ" marking]
&&
{\zerocategory}
\end{tikzcd}
\end{center}
where the context isofibrations are defined by the universality property
of the initial category~$\zerocategory$
and where the isofibration from~$\zerocategory$ to~$\zerocategory$
is the identity.

On the other hand, the plus rules $(\oplus_1)$ and $(\oplus_2)$ 
are interpreted by leaving the context maps unchanged
and postcomposing the map to the interpretation
of the principal formula~$A$ or~$B$
with the left and right injections, which are isofibrations:
\begin{center}
\begin{tikzcd}
{A}\arrow[rr,->>]
&&
{A}+{B}
&\text{and}&
{B}\arrow[rr,->>]
&&
{A}+{B}
\,.
\end{tikzcd}
\end{center}

\paragraph{Multiplicatives (tensor, one).}
The interpretation is easy and does not require any explicit fibrant replacement.
Consider two proofs
$\SomeProof{\pi_1}{\Gamma \, , \, A}$
and
$\SomeProof{\pi_2}{\Delta \, , \, B}$
interpreted as separately fibrant spans
\begin{center}
\begin{tikzcd}[column sep = 1.8em]
\Gamma
&&
{{\pi_1}} \arrow[rr,fibrant] \arrow[ll,sepfib]
&&
{A}
\end{tikzcd}
\quad\quad\quad\quad
\begin{tikzcd}[column sep = 1.8em]
{\Delta}
&&
{{\pi_2}} \arrow[ll,sepfib] \arrow[rr,fibrant]
&&
{B}
\end{tikzcd}
\end{center}
The interpretation of the derivation tree $\pi$ obtained by combining $\pi_1$ and $\pi_2$
using a tensor rule $(\otimes)$ is defined as the cartesian product~$\pi:=\pi_1\times\pi_2$
of the two interpretations with the separately fibrant span
\begin{center}
\begin{tikzcd}[row sep = 0em]
{\Gamma} && {\pi_1}\arrow[ll,sepfib]
\\
&& && {{\pi_1}\times{\pi_2}} \arrow[llu,fibrant,"{\fstproj}"{swap}]\arrow[lld,fibrant,"{\sndproj}"]\arrow[rr,fibrant] && {A}\times{B}
\\
{\Delta} && {\pi_2}\arrow[ll,sepfib]
\end{tikzcd}
\end{center}
obtained by postcomposing the projection maps with the components
of the separately fibrant spans~\(\sem{\pi_1}\) and \(\sem{\pi_2}\).
Here, we use a general property of model categories
that the class $\mathcal{F}$ of fibrations (= the  isofibrations
in the case of the natural Quillen model structure on $\Cat$)
is closed under cartesian product.
We also use the fact that every object is fibrant,
and that the projection maps $\fstproj$ and $\sndproj$ are thus isofibrations.
%
The proof reduced to the rule $(1)$ introducing
the tensor unit is interpreted by the identity map~$\onecategory\to\onecategory$ 
which defines an isofibration on the category~$\onecategory$.

\paragraph{Multiplicatives (par, bot).}
One interesting aspect of our presentation of the homotopy span model
is that the interpretation of the par rule ($\parr$) is nontrivial and involves a fibrant replacement.
Consider a proof
$\SomeProof{\pi}{\Gamma \, , \, A \, , \, B}$
interpreted as a separately fibrant span
\begin{center}
\begin{tikzcd}[row sep = -.6em, column sep = 1.8em]
&&&& A
\\
\Gamma
&&
{{\pi}} \arrow[rru,->>]\arrow[rrd,->>] \arrow[ll,->>,sepfib]
\\
&&&&
B
\end{tikzcd}
\end{center}
The functor $\pi\to A\times B$ obtained by the universality property of the cartesian product in~$\Cat$
is not necessarily an isofibration, but it can be factored in the Quillen model category
as an acyclic cofibration followed by an isofibration:
\begin{center}
\begin{tikzcd}[row sep = 0em, column sep = 1.8em]
{{\pi}} \arrow[rr,>->,"\iota_\pi","\sim"'] 
&&
{{\Iso{\pi}}} \arrow[rr,->>]
&&
A\times B
\end{tikzcd}
\end{center}
where $ \Iso{\pi} \to A\times B$ is defined as the canonical fibrant replacement of $\pi\to A\times B$.
Note that the category~$\Iso{\pi}$ comes with the acyclic fibration~$\varepsilon_\pi$
of the canonical fibrant replacement:
\begin{center}
\begin{tikzcd}[row sep = 0em, column sep = 1.8em]
{{\pi}}
&&
{{\Iso{\pi}}} \arrow[rr,->>] \arrow[ll,->>,"\varepsilon_\pi"{swap},"\sim"]
&&
A\times B
\end{tikzcd}
\end{center}
The interpretation of the derivation tree $\SomeProof{\pi'}{\Gamma \, , \, A \parr B}$
obtained by applying the par ($\parr$) rule is then obtained by taking 
the category $\Iso{\pi}$ as support and postcomposing
\begin{center}
\begin{tikzcd}[row sep = 0em, column sep = 1.8em]
\Gamma
&&
{{\pi}} \arrow[ll,sepfib] 
&&
{{\Iso{\pi}}} \arrow[rr,->>] \arrow[ll,->>,"\varepsilon_\pi"{swap},"\sim"]
&&
A\times B
\end{tikzcd}
\end{center}
with the context maps of the original separately fibrant span.

On the other hand, the interpretation of the par unit $(\bot)$ is easy 
to define and does not require any explicit fibrant replacement.
Given a proof~$\SomeProof{\pi}{\Gamma}$ interpreted as the separately fibrant span
\begin{center}
\begin{tikzcd}[row sep = 0em, column sep = 1.8em]
\Gamma
&&
{{\pi}} \arrow[ll,sepfib] 
\end{tikzcd}
\end{center}
the interpretation of the proof~$\SomeProof{\pi'}{\Gamma\,,\,\bot}$ 
obtained after the rule~$(\bot)$ is simply defined by adding a functor 
to the terminal category~$\onecategory$ in the following way:
\begin{center}
\begin{tikzcd}[row sep = 0em, column sep = 1.8em]
\Gamma
&&
{{\pi}} \arrow[ll,sepfib] \arrow[rr,->>]
&&
\onecategory
\end{tikzcd}
\end{center}
Here, one uses the specific property of the natural Quillen model on~$\Cat$
that every object~$\pi$ is fibrant, which precisely means that the unique functor $\pi\to\onecategory$ 
is an isofibration.
%
%


\paragraph{Exponentials (dereliction, weakening).}
The promotion rules will be treated 
in dedicated \cref{section/promotion-rule},
but it is already easy to describe the rest of the rules.
Consider one proof $\pi\vdash \Gamma, A$ interpreted as a separately fibrant span
\begin{center}
\begin{tikzcd}[column sep = 1.8em]
\Gamma
&&
{{\pi}} \arrow[rr,fibrant] \arrow[ll,sepfib]
&&
{A}
\end{tikzcd}
\end{center}
the proof $\vdash \Gamma, ?A$ 
obtained after a dereliction rule ($\wn$) is interpreted 
by keeping the same underlying category and postcomposing the span
\begin{center}
\begin{tikzcd}[column sep = 1.8em]
\Gamma
&&
{{\pi}} \arrow[rr,fibrant] \arrow[ll,sepfib]
&&
{A} \arrow[rr,fibrant,"{\eta_{\Acategory}}"] 
&&
\Sym{\,\Acategory}
\end{tikzcd}
\end{center}
with the isofibration $\eta_{\Acategory}:\Acategory\to\Sym{\,\Acategory}$
which sends any object $a$ of ${A}$ to the singleton word $\whynotword{a}$.
Similarly, given a proof $\pi\vdash \Gamma$ interpreted as a separately fibrant span, 
the proof  $\pi'\vdash \Gamma,?A$ obtained after the weakening rule (weak) 
is interpreted by projecting~$\pi$ on the terminal category~${\onecategory}$ 
with the isofibration $\pi\to\onecategory$ and then composing
\begin{center}
\begin{tikzcd}[column sep = 1.8em]
\Gamma
&&
{\pi} \arrow[rr,fibrant] \arrow[ll,sepfib]
&&
\onecategory
\arrow[rr,fibrant]
&&
\Sym{\,\Acategory}
\end{tikzcd}
\end{center}
with the isofibration $\onecategory\to\Sym{\,\Acategory}$
which sends the unique object $\ast$ of ${\onecategory}$
to the empty word $\whynotword{}$ in~$\Sym{\,\Acategory}$.

\paragraph{Exponentials (contraction).}
For the contraction, we follow a treatment similar to that of the \(\parr\).

Given a proof $ \pi \vdash \Gamma \, , \, \wn A \, , \, \wn A $ and its interpretation as a separately fibrant span 
\[\begin{tikzcd} [row sep=0pt]
    &&&&\SymWhyNot{A}
    \\
    \Gamma 
    &&
    \pi 
    \arrow[ll,sepfib,"\tysf"']
    \arrow[rru,fibrant,"f"]
    \arrow[rrd,fibrant,"g"']
    \\
    &&&&\SymWhyNot{A}
\end{tikzcd} \]
we need to define an isofibration
\begin{tikzcd}[cramped]
\Gamma
&&
{{\contra{f}{g}}}
\arrow[rr,fibrant]
\arrow[ll,sepfib]
&&
{ \SymWhyNot{A}}
\end{tikzcd}. 
We first use the concatenation 
$\morphism*{\SymWhyNot\Acategory\times\SymWhyNot\Acategory}{\SymWhyNot\Acategory}$
to obtain:
\[
    \begin{tikzcd}
        \pi \arrow[rr,"\tuple{f,g}"]
        && \SymWhyNot\Acategory\times\SymWhyNot\Acategory \arrow[rr,"\concat"]
        && \SymWhyNot\Acategory
    \end{tikzcd}
\]
which, however, is not an isofibration.
Again, we consider its canonical fibrant replacement:
\begin{center}
\begin{tikzcd}[row sep = 0em, column sep = 1.8em]
{{\pi}} \arrow[rr,acycof',"\iota"] 
&&
{{\contra{f}{g}}} \arrow[rr,fibrant]
&&
\SymWhyNot\Acategory
\end{tikzcd}
\end{center}
which moreover comes with the acyclic fibration
\(\morphism*[\varepsilon]{\contra{f}{g}}[acyfib]{\pi}\).
The interpretation of the results of applying the contraction rule to \(\pi\)
is then obtained by taking the category $\contra{f}{g}$ as support and postcomposing
\begin{center}
\begin{tikzcd}[row sep = 0em, column sep = 1.8em]
\Gamma
&&
{{\pi}} \arrow[ll,sepfib] 
&&
{{\contra{f}{g}}} \arrow[rr,fibrant] \arrow[ll,acyfib',"\varepsilon"{swap}]
&&
\SymWhyNot\Acategory
\end{tikzcd}
\end{center}
with the context maps of the original separately fibrant span.

\section{The shuffle factorization theorem}
\label{section/factorization}

Before carrying on, we find convenient to introduce two different notations
for the symmetric strict monoidal category~$\Sym\,\Acategory$
freely generated by a small category~$\Acategory$:
\begin{center}
$\SymPlus{\,A}$ with objects finite words of objects of~$\Acategory$ noted $\genbangword{a}{n}$;
\end{center}
\begin{center}
$\SymProd{\,A}$ with objects finite words  of objects of~$\Acategory$ noted $\genwhynotword{a}{n}$.
\end{center}
Morphisms in $\SymProd{A}$ are the same as before in $\Sym{A}$, but a morphism
in $\SymPlus{(\Acategory)}$ 
\begin{center}
\begin{tikzcd}
\genbangmorph{\sigma}{f}{n}
\quad : \quad \genbangword{a}{n}\arrow[rr] && \genbangword{a'}{n}
\end{tikzcd}
\end{center}
is a tuple consisting of a permutation $\sigma\in \permutationgroup{n}$
together with a family of morphisms $f_i:a_i\to a'_{\sigma(i)}$ in $\Acategory$.
%
%
%
Our reason for having distinct conventions is that we see $\SymPlus{(\Acategory)}$ as a subcategory
of the free cocartesian category $\FamAface{(\Acategory)}$
generated by $\Acategory$ -- with the same objects as \(\Sym{(\Acategory)}\),
but with morphisms
\begin{center}
\begin{tikzcd}
\genbangmorph{\sigma}{f}{n}
\quad : \quad \genbangword{a}{n}\arrow[rr] && \genbangword{a'}{p}
\end{tikzcd}
\end{center}
consisting of a function $\sigma: \{1,\dotsc,n\}\to \{1,\dotsc,p\}$
together with a family of morphisms $f_i:a_i\to a'_{\sigma(i)}$ in $\Acategory$
for $1\le i\le n$ --
and $\SymProd{(\Acategory)}$
as a subcategory of the free cartesian category $\FamBface{(\Acategory)}$
generated by $\Acategory$ --
defined symmetrically.
We then choose to redefine the interpretation of exponential formulae,
setting $\sem{\wn A}=\SymWhyNot{\sem A}$ and $\sem{\oc A}=\SymBang{\sem A}$.

%
Recall that the category $\FamBface{(\Acategory)}$ defines a Grothendieck opfibration
\begin{center}
\begin{tikzcd}[column sep=1.5em]
q \quad : \quad \FamBface{\Acategory}\arrow[rr] && \Set
\end{tikzcd}
\end{center}
Given a function $f:J\to I$ and a family $\familyBface{a_i}{i}{I}$
indexed by $I$, we write
\[
\push{f} :  \familyBface{a_{i\in I}}{i}{I}  \to \familyBface{a_{f(j)}}{j}{J} 
\]
for the canonical map.
We are now ready to define the notions of \emph{shuffle} and \emph{deshuffle
isomorphisms}
in the symmetric monoidal categories~$\SymBang\Acategory$
and~$\SymWhyNot\Acategory$.
As a symmetric monoidal category, $\SymBang\Acategory$ can be seen 
as a symmetric operad with the same objects, also noted $\SymBang\Acategory$,
whose $n$-ary morphisms $x_1,\dots x_n\to x$
are defined as the morphisms $x_1\otimes\cdots\otimes x_n\to x$
of the category.
%
Given a sequence of $n$ natural numbers $k_1, \dotsc, k_n$
with sum the natural number~$k=k_1+\cdots+k_n$ 
seen as totally ordered sets,
a $(k_1,\dotsc,k_n)$-shuffle~$\sigma$ is a bijective function
\begin{center}
\begin{tikzcd}[column sep = 1.5em]
k_1+\dots+k_n \arrow[rr,"\sigma"] &&  k
\end{tikzcd}
\end{center}
such that each restriction
\begin{center}
\begin{tikzcd}[column sep = 1.5em]
k_i\arrow[rr,"{\mathbf{inj}_i}"] && k_1+\dots+k_n \arrow[rr,"\sigma"] &&  k
\end{tikzcd}
\end{center}
is a monotone function.
A \emph{shuffle isomorphism} 
in the operad $\SymBang\Acategory$
is defined as a morphism
\begin{center}
\begin{tikzcd}
    \bangmorph{\sigma}{\identity{},\dotsc,\identity{}}
\quad : \quad 
\genbangword{a^1}{k_1} \,,\, \cdots \,,\, \genbangword{a^n}{k_n}
\arrow[rr]
&&
\genbangword{a'}{k_1+\dots+k_n}
\end{tikzcd}
\end{center}
where the permutation $\sigma\in\permutationgroup{k_1+\dots+k_n}$
is a $(k_1,\dotsc,k_n)$-shuffle.

\begin{proposition}
Every $n$-ary morphism $f:x_1,\cdots,x_n\to y$ 
in the operad $\SymBang\Acategory$
factors uniquely as $f=\sigma\circ (h_1,\cdots,h_n)$
where $\sigma:y_1,\cdots,y_n\to y$ is a shuffle isomorphism 
and each $h_k:x_k\to y_k$ is a unary morphism, for $1\leq k\leq n$.
\end{proposition}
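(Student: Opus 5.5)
The plan is to reduce the statement to a purely combinatorial factorization of the permutation underlying $f$, and then to check that the components on $\Acategory$ are forced. Write $x_k=\genbangword{a^k}{m_k}$ and $y=\genbangword{b}{m}$. An $n$-ary morphism $f:x_1,\dots,x_n\to y$ is by definition a morphism $x_1\otimes\cdots\otimes x_n\to y$ in $\SymBang\Acategory$. The tensor is concatenation, so $f=\bangmorph{\tau}{f_1,\dots,f_m}$ with $m=m_1+\dots+m_n$, a bijection $\tau: m_1+\dots+m_n\to m$, and morphisms $f_j:c_j\to b_{\tau(j)}$. Here $c_j$ runs through the concatenated word. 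Composition in $\SymBang\Acategory$ composes the permutations covariantly and composes the $\Acategory$-components along them. The tensor of unary morphisms $(h_1,\dots,h_n)$ is the block-diagonal morphism, whose permutation is $\rho_1+\dots+\rho_n$.

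\textbf{Step 1: the permutation.} Factor $\tau$ uniquely as $\tau=\sigma\circ(\rho_1+\dots+\rho_n)$, where $\sigma$ is an $(m_1,\dots,m_n)$-shuffle and each $\rho_k$ is a permutation of $m_k$.

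For existence, let $S_k=\tau(\mathbf{inj}_k(m_k))\subseteq m$. Define $\sigma\circ\mathbf{inj}_k$ to be the unique monotone bijection $m_k\to S_k$. Then set $\rho_k=(\sigma\circ\mathbf{inj}_k)^{-1}\circ\tau\circ\mathbf{inj}_k$.

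For uniqueness, note that the image of the $k$-th block under $\sigma\circ(\rho_1+\dots+\rho_n)$ is $\sigma(\mathbf{inj}_k(m_k))$. This image must equal $S_k$. A monotone bijection onto a given subset of a totally ordered set is unique, so $\sigma$ is determined, and then each $\rho_k$ is determined because $\sigma$ is invertible.

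\textbf{Step 2: objects and morphisms.} Define $y_k$ to be the subword of $y$ read along $\sigma\circ\mathbf{inj}_k$, that is, $y_k=\genbangword{b_{\sigma(\mathbf{inj}_k(1))},\dots}{}$ of length $m_k$. With this choice, $\bangmorph{\sigma}{\identity{},\dots,\identity{}}:y_1,\dots,y_n\to y$ is a shuffle isomorphism; its identity components are forced by the condition that $\sigma$ carry each letter of $y_k$ to an equal letter of $y$. Let $h_k=\bangmorph{\rho_k}{g^k}$, where $g^k_i=f_{\mathbf{inj}_k(i)}$, now regarded as a morphism into the corresponding letter of $y_k$. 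Computing $\sigma\circ(h_1\otimes\cdots\otimes h_n)$ gives permutation $\tau$ and components $f_j$, so it equals $f$.

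For uniqueness, suppose $f=\sigma'\circ(h'_1,\dots,h'_n)$ is another such factorization. Its permutation part satisfies the factorization of Step 1, so $\sigma'=\sigma$ and the permutations of the $h'_k$ are the $\rho_k$. The targets $y'_k$ are then forced, because $\sigma$ has identity components and so $y'_k$ must be the subword of $y$ found above. Finally, composing with the identity components of $\sigma$ leaves the $\Acategory$-components of the $h'_k$ unchanged, so they equal the $f_j$.

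The only step needing real care is the bookkeeping convention for morphisms of $\SymBang\Acategory$, namely $f_i:a_i\to a'_{\sigma(i)}$ with covariant composition of permutations. Everything else, including the crux in Step 1 (a bijection onto a finite totally ordered set splits uniquely as a monotone map after a permutation of each block), is elementary.
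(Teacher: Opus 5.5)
Your argument is correct and complete. The paper states this proposition without proof, so there is no route of theirs to compare against. Your proof supplies the verification the authors leave implicit, in two parts. First, a bijection $\tau$ splits uniquely as a shuffle after a block permutation $\rho_1+\dots+\rho_n$; equivalently, shuffles are canonical representatives of the left cosets of the Young subgroup $\permutationgroup{m_1}\times\cdots\times\permutationgroup{m_n}$ in $\permutationgroup{m}$. Second, a shuffle isomorphism has identity components, so the intermediate words $y_k$ and all the $\Acategory$-components are forced into the unary factors $h_k$.

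One point you use tacitly in the uniqueness part: each $y'_k$ has length $m_k$, because $\SymBang\Acategory$ has no morphisms between words of different lengths. This is what makes $\sigma'$ an $(m_1,\dots,m_n)$-shuffle, so that Step~1 applies.

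Your explicit description also gives a property the paper needs later. The components of each $h_k$ are components of $f$, so if $f$ is an isomorphism then so is every $h_k$. In its dual, deshuffle form in $\SymWhyNot\Acategory$, this is exactly what the promotion section invokes when it asserts that the factors $\beta_j$ of the isomorphism $\varphi'_{i_0}\circ\alpha$ are isomorphisms.
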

Symmetrically, $\SymWhyNot\Acategory$ defines a symmetric cooperad
where a \emph{deshuffle isomorphism}
\begin{center}
\begin{tikzcd}
\whynotmorph{\sigma}{\identity{},\dotsc,\identity{}}
\quad : \quad 
\genwhynotword{a'}{k_1+\dots+k_n}
\arrow[rr]
&&
\genwhynotword{a^1}{k_1} \,,\, \cdots \,,\, \genwhynotword{a^n}{k_n}
\end{tikzcd}
\end{center}
is defined as a $n$-ary morphism
whose counterpart (or reversal)
in the opposite symmetric operad $(\SymWhyNot\Acategory)^{\,op}=\SymBang(\Acategory^{\,op})$
is a shuffle isomorphism.

Given a sequence of categories $ \Gamma = A_1,\dots, A_n$, we observe that the product category $ \SymWhyNot{\Gamma}^\times =  \SymWhyNot{A_1} \times \cdots \times \SymWhyNot{A_k}  $ is a symmetric strict monoidal category, with tensor product defined by pointwise list concatenation. The shuffle factorization system lift to $ \SymWhyNot{\Gamma}^\times $ componentwise. A deshuffle isomorphism in $ \SymWhyNot{\Gamma}^\times $ consists of a tuple $  (\sigma_1, \dots, \sigma_n) $ where $ \sigma_i$ is a deshuffle isomorphism in $ \SymWhyNot{A_i}$.

\section{The promotion rule}\label{section/promotion-rule}
We now consider the \emph{promotion} rule, whose interpretation
exploits the shuffle factorization observed in~\S\ref{section/factorization}.
%
Given a context $ \Gamma = A_1 ,\dots, A_n$ we use the notations $ \wn \Gamma =  \wn A_1 ,\dots,  \wn A_n$
and $ \SymWhyNot{\Gamma} = \SymWhyNot{\Acategory_1} , \dots,  \SymWhyNot{\Acategory_n}$.
Given a proof $\SomeProof{\pi}{\wn \Gamma, B}$,
we consider its interpretation as a separately fibrant span 
\begin{center}
\begin{tikzcd}
 \SymWhyNot{\Gamma} 
&&
{{\pi}}
\arrow[rr,fibrant,"g"]
\arrow[ll,sepfib, "\tysf"']
&&
{B}
\end{tikzcd}
\end{center}
and describe how to build the separately fibrant span
interpreting the result of the promotion of \(\pi\):
\begin{center}
\begin{tikzcd}
 \SymWhyNot{\Gamma} 
&&
{{\pi^\oc}} \arrow[rr,fibrant] \arrow[ll,sepfib]
&&
{\SymBang{B}}
\end{tikzcd}.
\end{center}
First, we observe that the functorial action of $  \SymBang{}$ preserves isofibrations:
given an isofibration $\morphism*[f]{\Acategory}[fibrant]{\Bcategory}$
the functor
$\morphism*[\SymBang{f}]{\SymBang{\Acategory}}{\SymBang{\Bcategory}}$
defined by the map 
\( \bangword{\ty_1,\dots, \ty_k} \mapsto \bangword{f(\ty_1),\dots, f(\ty_k)} \)
is also an isofibration.
Hence, we have an isofibration
$$\SymBang{g} \quad : \quad \begin{tikzcd}[cramped] {{\SymBang{\pi}}}  \arrow[r,fibrant] & \SymBang{\Bcategory} \end{tikzcd}$$
However, we also need to deal with the context span
$\tysf:\begin{tikzcd}[cramped] {{\pi}}  \arrow[r,sepfib] & \SymWhyNot{\Gamma} \end{tikzcd}$.

As a first approach, we could try to exploit the universal property of $ \SymBang{\pi}$.
Given a symmetric (unbiased) monoidal category $X $ and a functor 
$f:\begin{tikzcd}[cramped] {{\Acategory}}  \arrow[r] & X \end{tikzcd}$,
by the fact that $\SymBang{\pi} $ is the free symmetric strict monoidal category on $ \pi$,
we obtain a functor
$\monlift{f}:\begin{tikzcd}[cramped] {\SymWhyNot{\Acategory}}  \arrow[r]& X \end{tikzcd} $,
induced by the map
\( \seqdots{\ty}{1}{k} \mapsto f(a_1) \otimes \cdots \otimes f(a_k) \),
making the following diagram commute
\[\begin{tikzcd}
	A
	\arrow[rr, "{\eta_A}"]
	\arrow[d, "f"']
        &&
        {\SymWhyNot{A}}
        \arrow[dll,"{\monlift{f}}"]
        \\
	X
\end{tikzcd}.\]
Hence, we can apply this construction to our span
$\tysf:\begin{tikzcd}[cramped] {{\pi}}  \arrow[r,sepfib] & \SymWhyNot{\Gamma} \end{tikzcd}$,
componentwise, and obtain a span
\begin{center}
\begin{tikzcd}[column sep = 1.5em] 
\monlift{\tysf} \quad : \quad {\SymBang{\pi}} \arrow[rr,spanmaps] && \SymWhyNot{\Gamma}
\end{tikzcd}
\end{center}
However, the resulting span has no reason to be separately fibrant.
A simple example is given by applying the construction to the identity
$ \identity{\SymBang{A}}$
since $\monlift{\identity{\SymBang{A}}}:\begin{tikzcd}[cramped] {\SymBang{\SymBang{\Acategory}}}  \arrow[r] & \SymBang{\Acategory} \end{tikzcd}$
coincides with the monad multiplication
\( \seqdots{\tyl}{1}{k}  \mapsto  \tyl_1  \otimes \cdots \ \otimes \tyl_k \),
which is not an isofibration in general.

In order to fix this, we shall perform a replacement of the span
$\monlift{\tysf}$ by an appropriate construction,
that we shall call the \emph{shuffle category} on $ \tysf $,
denoted as $  \Shuffle_{ \tysf  }$.
As we will see, this construction comes equipped with an acyclic cofibration (an injective on objects equivalence of categories)
\begin{tikzcd}[cramped] \SymBang{\pi} \arrow[r,>->,"{\sim}"] & {\Shuffle_{\tysf}} \end{tikzcd}
and a separately fibrant span
$\sfiblift{\tysf} :\begin{tikzcd}[cramped] {\Shuffle_{\tysf}} \arrow[r,sepfib] & \SymWhyNot{\Gamma} \end{tikzcd}$
making the following diagram commute 
\begin{equation}\label{equation/commutative-lifting-diagram}
\begin{tikzcd}
        \SymBang{\pi}
        \arrow[rr,acycof]
        \arrow[dr,spanmaps,"\monlift{\tysf}"{swap}]
        &&
	\Shuffle_{\tysf}
        \arrow[dl,sepfib,"\,\sfiblift{\tysf}"]
        \\
        &
	\SymWhyNot{\Gamma}
\end{tikzcd}
\end{equation}
as well as with an acyclic isofibration
$\morphism*{\Shuffle_{\tysf}}[fibrant,acyclic]{\SymBang{\pi}}$
whose postcomposition with 
\(\SymBang{g}:\begin{tikzcd}[cramped]\SymBang{\pi}\arrow[r,fibrant]&\SymBang{B}\end{tikzcd}\)
will give the remaining component of the separately fibrant span \(\pi^\oc\).

\paragraph{The shuffle category}
We associate to every span 
\(\morphism*[\tysf]{\pi}[spanmaps]{\SymWhyNot{\Gamma}}\),
the \emph{shuffle category} \(\Shuffle_{\tysf}\) defined as follows:
\begin{itemize}
\item an object of \(\Shuffle_{\tysf}\) is a pair consisting of an object
\[
 \bangword{p_1,\dots, p_k}
 \in \SymBang{ \pi  }
\]
together with a componentwise deshuffle isomorphism
\begin{center}
\begin{tikzcd}[column sep = 1em]
\varphi = (\varphi_1,\dots,\varphi_n) \, \quad : \quad \,
 \gamma \arrow[rr] &&   \tysf(p_1) \otimes \dotsc \otimes \tysf (p_k)
\end{tikzcd}
\end{center}
from an object 
\begin{center}
$\gamma \,\in\,\SymWhyNot{\Gamma}^\times = \SymWhyNot{A_1} \times \cdots \times \SymWhyNot{A_n}$
\end{center}
where each $\varphi_i$ is a deshuffle isomorphism in the category $\SymWhyNot{A_i}$,  for $1\leq i\leq n$.
%
%
%
\item a morphism from an object
\begin{center}
$\bangword{ p_1,\dots, p_k} \,\in\, \SymBang{\pi}$
with componentwise deshuffle isomorphism
$\varphi \, : \, \gamma \to \tysf(p_1) \ \otimes \ \dotsc  \ \otimes \ \tysf (p_k)$
\end{center}
to an object
\begin{center}
$\bangword{ p'_1,\dots, p'_k} \,\in\, \SymBang{\pi}$
with componentwise deshuffle isomorphism
$\varphi' \, : \, \gamma' \to \tysf(p'_1) \ \otimes \ \dotsc  \ \otimes \ \tysf (p'_k)$
\end{center}
is defined as a triple consisting of a permutation $\sigma\in\permutationgroup{k}$
together with 
\begin{center}
a family of morphisms
${{h_i} \, : \, {p_i} \to {p'_{\sigma i}}}$ in the category~$\pi$
for $1\le i\le k$,
\end{center}
and a morphism
$\bar{h} \, : \, {\gamma} \to {\gamma'}$
in $\SymWhyNot{\Gamma}^\times$ making the diagram commute:
\begin{equation}
    \label{equation/Shuffle}
    \begin{tikzcd}
    \Xmonobjc
    \arrow[d,"\bar{h}"{swap}]\arrow[rrrr,"{\varphi}"]
    &&
    &&
    {\tysf(p_1) \Xtensor \cdots \Xtensor \tysf(p_k) }
    \arrow[d,"{\tysf(h_1)\Xtensor\cdots \Xtensor \tysf(h_k)}"]
    \\
    \Xmonobjcp\arrow[rr,"{\varphi'}"]
    &&
    {\tysf (p'_1) \Xtensor\cdots \Xtensor \tysf (p'_k)}
    \arrow[rr,"{\push{\sigma}}"]
    &&
    {\tysf (p'_{\sigma(1)}) \Xtensor\cdots \Xtensor \tysf (p'_{\sigma(k)})}
    \end{tikzcd}
\end{equation}
\end{itemize}
Note that the data of the morphism $\bar{h}$ is redundant since 
it always exists and is uniquely determined in the category $\SymWhyNot{\Gamma}^\times$
by the sequence of morphisms $h_1,\dotsc,h_k$ because $\varphi$ and $\varphi'$ and $\push{\sigma}$ are isomorphisms.
We thus generally denote such a morphism as a pair
\(\genbangmorph{\sigma}{h}{k}\).

\paragraph{The acyclic cofibration}

We now define the equivalence of categories 
\begin{equation}\label{equation/shufflecofibration}
\begin{tikzcd}
 \SymBang{\pi}
\arrow[rr,acycof,yshift=.35em]
&&
 \Shuffle_{\tysf}
\arrow[ll,acyclic',fibrant,yshift=-.35em]
\end{tikzcd}
\,.
\end{equation}

The cofibration is defined 
as the functor that maps every word $\bangword{p_1,\dots, p_k} $ to the same word equipped
with the identity deshuffle isomorphism
\[
\begin{tikzcd}
{\identity{\Xmonobjc}}
\quad : \quad
{\tysf (p_1) \otimes \cdots \otimes \tysf (p_k)}
\arrow[rr]
&&
{\tysf (p_1) \otimes \cdots \otimes \tysf (p_k)}
\end{tikzcd}\,
\]
from the object $\Xmonobjc =   \tysf (p_1) \otimes \cdots \otimes \tysf (p_k)$.
The functor is clearly injective on objects, and fully faithful.
It is also essentially surjective: each of a word and a deshuffle isomorphism is isomorphic
to the same word equipped with the identity deshuffle isomorphism.

The fibration is defined as the forgetful functor which
maps a pair $(\bangword{p_1,\dots, p_k},\varphi)$ to the underlying tuple $\bangword{p_1,\dots, p_k}$.
This functor is an isofibration.
Indeed, given a pair $(\bangword{p_1,\dots, p_k},\varphi)$
consisting of a tuple $\bangword{p_1,\dots, p_k} $ with deshuffle componentwise isomorphism
$\varphi : \gamma \to \tysf(p_1) \ \otimes \ \dotsc  \ \otimes \ \tysf (p_k) $,
together with an isomorphism
\(\genbangmorph{\sigma}{h}{k}:\bangword{p_1,\dots, p_k}\to\bangword{p'_1,\dots, p'_k}\)
in \(\SymBang{\pi}\), with \(h_i:p_i\to p'_{\sigma(i)}\) for \(1\le i\le k\),
it is sufficient to consider the object
\(\bangword{p'_1,\dots, p'_k}\) together with the identity deshuffle
with the appropriate definition of~$\bar{h}$.

\paragraph{The fibrations (= isofibrations)}

We now define the separately fibrant span mentioned above in \eqref{equation/commutative-lifting-diagram}
\begin{equation}\label{equation/shufflefibration}
    \begin{tikzcd}[column sep=1.5em]
{\sfiblift{\tysf}}
\quad : \quad \Shuffle_{\tysf}
\arrow[rr,sepfib]
&&
\SymWhyNot{\Gamma}
\end{tikzcd}.
\end{equation}
by the maps
\[   (\bangword{p_1,\dots, p_k},\varphi : \gamma \to \tysf(p_1) \otimes \cdots \otimes \tysf(p_k)) \qquad \qquad \mapsto  \qquad \qquad     \gamma      \]
on objects, and 
\[  \genbangmorph{\sigma}{h}{k}:
    (\bangword{p_1,\dots, p_k},\varphi)
    \to
    (\bangword{p'_1,\dots, p'_k},\varphi')
    \qquad \qquad \mapsto  \qquad \qquad \bar{h} \]
where \(\bar{h}:\gamma\to\gamma'\) is the unique morphism induced 
by \(\genbangmorph{\sigma}{h}{k}\)
as in \eqref{equation/Shuffle}.
Before showing it is separately fibrant, we introduce additional
notations for its components.

Writing \(\tysf = \tuple{f_1,\dotsc,f_n}\) as a tuple of functors with 
\(f_i:\pi\to \SymWhyNot{A_i}\) for \(1\le i\le n\),
we may write \(\varphi = (\varphi_1,\dots,\varphi_n):
\gamma=(\tyl_1,\dotsc,\tyl_n) \to \tysf(p_1) \otimes \cdots \otimes \tysf(p_k)\),
where each morphism 
\(\varphi_i: \tyl_i \to f_i (p_1) \otimes \cdots \otimes f_i (p_k)\)
is a deshuffle isomorphism in \(\SymWhyNot{A_i}\).
Similarly, given \(\genbangmorph{\sigma}{h}{k}\), we may 
write the induced \(\bar{h}=(\alpha_1,\dotsc,\alpha_n):\gamma\to \gamma'\)
with each \(\alpha_i:\tyl_i\to\tyl'_i\) a morphism in \(\SymWhyNot{A_i}\).
Then \(\sfiblift{\tysf}=\tuple{\sfiblift{f_1},\dotsc,\sfiblift{f_n}}\)
where \(\sfiblift{f_i}\) is the functor sending the 
object \((\bangword{p_1,\dots, p_k},\varphi)\)
to \(\tyl_i\) and the morphism \(\genbangmorph{\sigma}{h}{k}\)
to \(\alpha_i\).
Remark that, although we retained the notation \(\sfiblift{f_i}\) for conciseness,
the definition of each component of \(\sfiblift{\tysf}\) depends on the whole
tuple \(\tysf\), not just on the corresponding \(f_i\).

Now, fixing \(1\le i_0\le n\), we show that \(\sfiblift{f_{i_0}}\) is an isofibration.
Let \(\alpha:\tyl\to\tyl'\) be an isomorphism of \(\SymWhyNot{A_{i_0}}\),
and \( (\bangword{p'_1,\dots, p'_k},\varphi') \) an object
of \( \Shuffle_{\tysf} \), mapped to \(\tyl'\) by \(\sfiblift{f_{i_0}}\).
Using the same notations as above, this means \(\tyl'_{i_0}=\tyl'\).
We want to construct an object
\( (\bangword{p_1,\dots, p_k},\varphi) \)
and an isomorphism
\[
    \genbangmorph{\sigma}{h}{k}:
    (\bangword{p_1,\dots, p_k},\varphi)
    \to
    (\bangword{p'_1,\dots, p'_k},\varphi')
\]
which is mapped to \(\alpha\) by \(\sfiblift{f_{i_0}}\):
\ie, we want \eqref{equation/Shuffle} to commute,
each \(h_j\) to be an isomorphism, and
the component \(\alpha_{i_0}\) of \(\bar{h}\) to be \(\alpha\).

By the shuffle factorization system described in \S\ref{section/factorization}, 
the composite isomorphism $\varphi'_{i_0} \circ \alpha$ can be factored as
\begin{equation}
    \label{equation/factorization-sfiblift}
    \varphi'_{i_0} \circ \alpha
    \   =  \ 
    (  \beta_1 \otimes \cdots \otimes \beta_k )\circ \psi
\end{equation}
where 
\begin{center}
\begin{tikzcd}[column sep = 1em]
    \psi \quad : \quad \tyl  \arrow[rr] &&  \tylother_1 \otimes \cdots \otimes \tylother_k
\end{tikzcd}
\end{center}
is a deshuffle isomorphism in \(\SymWhyNot{A_{i_0}}\),
and each $\beta_j : \tylother_j \to f_{i_0} (p'_j)  $
is an isomorphism for \(1\le j\le k\).
At this stage, we use the fact that $f_{i_0}$ is an isofibration,
and deduce that there are objects $ p_j \in \pi $ 
and isomorphisms $  h_j :   p_j \to p'_j $ 
such that $ f_{i_0} (p_j) = \tylother_j$
and $  f_{i_0} (h_j) = \beta_j$
for \(1\le j\le k\).

To complete the definition of the required
isomorphism in \(\Shuffle_{\tysf}\),
we take \(\sigma\) to be the identity permutation,
and we are only left to specify the 
deshuffle isomorphisms \(\varphi_i\)
for \(1\le i\le n\),
in such a way that the commutation of \eqref{equation/Shuffle}
(whose lower row is reduced to \(\varphi'\),
because \(\push{\sigma}\) is the identity)
is compatible with the desired identity:
\(\alpha_{i_0}=\alpha\).
We thus set \(\varphi_{i_0}=\psi\):
the commutation \eqref{equation/factorization-sfiblift}
becomes
\[
    \varphi'_{i_0} \circ \alpha
    \   =  \ 
    (  f_{i_0} (h_1) \otimes \cdots \otimes f_{i_0} (h_k) )\circ \varphi_{i_0}
\]
which ensures the commutation of the \(i_0\)-th component of \eqref{equation/Shuffle},
with \(\alpha_{i_0}=\alpha\).
It remains only to choose deshuffle isomorphisms
\(\varphi_i: \tyl_i \to f_i (p_1) \otimes \cdots \otimes f_i (p_k)\)
in \(\SymWhyNot{A_i}\) for \(i\not=i_0\):
it suffices to take \(\tyl_i=f_i (p_1) \otimes \cdots \otimes f_i (p_k)\)
and \(\varphi_i\) the corresponding identity, which is a deshuffle
isomorphism.
Note that the value of \(\alpha_{i}\) induced by this choice
is irrelevant for our purpose when \(i\not=i_0\):
the existence of an arbitrary \(\varphi_i\) is sufficient.

\section{Conclusion}
\label{section/conclusion}

The homotopy interpretation of linear logic based on spans~\cite{Mellies19lics}
associates to every proof of linear logic
a \emph{homotopy invariant} modulo cut-elimination,
where homotopy is understood
as categorical equivalence between categories in~$\Cat$.
In the present paper, we give a detailed and direct description
of the homotopy interpretation of proofs, where the \emph{fibrant interpretation} 
used in~\cite{Mellies19lics} is replaced by a more liberal \emph{separately fibrant interpretation}
carefully chosen to remain as simple as possible.
The construction relies on the existence of a number of simple and explicit separately fibrant replacements
of the functors (concatenation tensor product, distributivity) underlying the rules of linear logic
(in particular, promotion).
%
%

One purpose -- and in fact the initial motivation -- of this explicit
reconstruction of the original model~\cite{Mellies19lics} is to investigate how the syntactic
presentations~\cite{TAO17,Olimpieri21} of the distributor interpretation of
linear logic~\cite{FioreGambinoHylandWinskel07}, could be adapted to the
homotopy span model.
A decisive step in the process of designing such a syntactic counterpart
is to rely on a presentation as simple and explicit as possible of the homotopy span model.
In that perspective, an important challenge for future work will be 
to understand how our combinatorial study of shuffles could be combined
and unified with the analysis of symmetries developed by Clairambault and Forest,
where an explicit construction of homotopy pullbacks using orthogonality is described
in a variant of the span bicategory~\cite{CF2023,CF2024}.

Also, our interpretation of linear logic based on separately fibrant spans 
has a definite polycategorical flavour. 
We plan to investigate more in detail the axiomatics of the 2-dimensional
polycategorical structure of the interpretation.
A complementary direction for future investigations would be
to work at a tricategorical level
and to define the 2-cells up to an isomorphism (building on~\cite{hoffn:spanmon}), 
instead of strictly as done in~\cite{Mellies19lics} and here.
This alternative definition would require to compose spans 
using homotopy pullbacks instead of usual pullbacks,
with tentative links to the recent work on the groupoid span interpretation
of linear logic~\cite{CF2023} and with the $ \infty$-categorical construction of~\cite{HaringtonMimram25}. 
Another perspective would be to replace the $ \Sym$ monad with 
another monoidal construction on small categories, such as the finite product completion. 
This variation has already been considered in the case of distributors \cite{fiore:cart,Galal20,Olimpieri20}
and it would thus be interesting to see how it works on the homotopy span model.


\bibliography{biblio}

\begin{thebibliography}{10}
\providecommand{\url}[1]{\texttt{#1}}
\providecommand{\urlprefix}{ }
\providecommand{\eprint}[2][]{\url{#2}}

\bibitem{AguiarMahajan10}
Aguiar, M. and S.~Mahajan, \emph{Monoidal Functors, Species and Hopf Algebras},
  volume~29 of \emph{CRM Monograph Series}, American Mathematical Society
  (2010).
\newline\urlprefix\url{https://doi.org/10.1090/crmm/029}

\bibitem{BucciarelliEhrhard00}
Bucciarelli, A. and T.~Ehrhard, \emph{On phase semantics and denotational
  semantics in multiplicative-additive linear logic}, Ann. Pure Appl. Log.
  \textbf{102}, pages 247--282 (2000).
\newline\urlprefix\url{https://doi.org/10.1016/S0168-0072(99)00040-8}

\bibitem{BucciarelliEhrhard01}
Bucciarelli, A. and T.~Ehrhard, \emph{On phase semantics and denotational
  semantics: the exponentials}, Ann. Pure Appl. Log. \textbf{109}, pages
  205--241 (2001).
\newline\urlprefix\url{https://doi.org/10.1016/S0168-0072(00)00056-7}

\bibitem{burroni1971tcat}
Burroni, A., \emph{{$T$-cat\'egories (cat\'egories dans un triple)}}, Cahiers
  de topologie et g\'eom\'etrie diff\'erentielle \textbf{12}, pages 215--321
  (1971).

\bibitem{CF2023}
Clairambault, P. and S.~Forest, \emph{The cartesian closed bicategory of thin
  spans of groupoids}, in: \emph{{LICS}}, pages 1--13 (2023).
\newline\urlprefix\url{https://doi.org/10.1109/LICS56636.2023.10175754}

\bibitem{CF2024}
Clairambault, P. and S.~Forest, \emph{An analysis of symmetry in quantitative
  semantics}, in: P.~Sobocinski, U.~D. Lago and J.~Esparza, editors,
  \emph{{LICS}}, {ACM} (2024).
\newline\urlprefix\url{https://doi.org/10.1145/3661814.3662092}

\bibitem{deCarvalho18}
de~Carvalho, D., \emph{Execution time of {\(\lambda\)}-terms via denotational
  semantics and intersection types}, Math. Struct. Comput. Sci. \textbf{28},
  pages 1169--1203 (2018).
\newline\urlprefix\url{https://doi.org/10.1017/S0960129516000396}

\bibitem{ehrhard-completeness}
Ehrhard, T., \emph{A completeness theorem for symmetric product phase spaces},
  J. Symb. Log. \textbf{69}, pages 340--370 (2004).
\newline\urlprefix\url{https://doi.org/10.2178/JSL/1082418530}

\bibitem{ehr:fin}
Ehrhard, T., \emph{Finiteness spaces}, Math. Struct. Comput. Sci. \textbf{15},
  pages 615--646 (2005).
\newline\urlprefix\url{https://doi.org/10.1017/S0960129504004645}

\bibitem{Ehrhard18}
Ehrhard, T., \emph{An introduction to differential linear logic: proof-nets,
  models and antiderivatives}, Math. Struct. Comput. Sci. \textbf{28}, pages
  995--1060 (2018).
\newline\urlprefix\url{https://doi.org/10.1017/S0960129516000372}

\bibitem{EhrhardRegnier06}
Ehrhard, T. and L.~Regnier, \emph{Differential interaction nets}, Theor.
  Comput. Sci. \textbf{364}, pages 166--195 (2006).
\newline\urlprefix\url{https://doi.org/10.1016/J.TCS.2006.08.003}

\bibitem{FioreGalalPaquet24}
Fiore, M., Z.~Galal and H.~Paquet, \emph{Stabilized profunctors and stable
  species of structures}, Log. Methods Comput. Sci. \textbf{20} (2024).
\newline\urlprefix\url{https://doi.org/10.46298/LMCS-20(1:17)2024}

\bibitem{FioreGambinoHylandWinskel07}
Fiore, M., N.~Gambino, M.~Hyland and G.~Winskel, \emph{{The cartesian closed
  bicategory of generalised species of structures}}, Journal of the London
  Mathematical Society \textbf{77}, pages 203--220 (2007), ISSN 0024-6107.
\newline\urlprefix\url{https://doi.org/10.1112/jlms/jdm096}

\bibitem{fiore:rel}
Fiore, M., N.~Gambino, M.~Hyland and G.~Winskel, \emph{Relative pseudomonads,
  kleisli bicategories, and substitution monoidal structures}, Selecta
  Mathematica  (2017).
\newline\urlprefix\url{https://doi.org/10.1007/s00029-017-0361-3}

\bibitem{fiore:cart}
Fiore, M. and A.~Joyal, \emph{Theory of para-toposes} (2015). Talk at the
  Category Theory Conference.

\bibitem{Galal20}
Galal, Z., \emph{A profunctorial scott semantics}, in: \emph{5th International
  Conference on Formal Structures for Computation and Deduction, {FSCD} 2020,
  June 29-July 6, 2020, Paris, France (Virtual Conference)}, volume 167 of
  \emph{LIPIcs}, pages 16:1--16:18 (2020).
\newline\urlprefix\url{https://doi.org/10.4230/LIPICS.FSCD.2020.16}

\bibitem{joyal:op}
Gambino, N. and A.~Joyal, \emph{On operads, bimodules and analytic functors},
  Memoirs of the American Mathematical Society \textbf{249} (2017), ISSN
  1947-6221.
\newline\urlprefix\url{https://doi.org/10.1090/memo/1184}

\bibitem{Girard87}
Girard, J., \emph{Linear logic}, Theor. Comput. Sci. \textbf{50}, pages 1--102
  (1987).
\newline\urlprefix\url{https://doi.org/10.1016/0304-3975(87)90045-4}

\bibitem{GrelloisMellies15}
Grellois, C. and P.~Melli{\`{e}}s, \emph{Relational semantics of linear logic
  and higher-order model checking}, in: S.~Kreutzer, editor, \emph{24th {EACSL}
  Annual Conference on Computer Science Logic, {CSL} 2015, September 7-10,
  2015, Berlin, Germany}, volume~41 of \emph{LIPIcs}, pages 260--276, Schloss
  Dagstuhl - Leibniz-Zentrum f{\"{u}}r Informatik (2015).
\newline\urlprefix\url{https://doi.org/10.4230/LIPICS.CSL.2015.260}

\bibitem{HaringtonMimram25}
Harington, E. and S.~Mimram, \emph{{\(\infty\)}-categorical models of linear
  logic}, in: M.~Fern{\'{a}}ndez, editor, \emph{10th International Conference
  on Formal Structures for Computation and Deduction, {FSCD} 2025, July 14-20,
  2025, Birmingham, {UK}}, volume 337 of \emph{LIPIcs}, pages 23:1--23:20,
  Schloss Dagstuhl - Leibniz-Zentrum f{\"{u}}r Informatik (2025).
\newline\urlprefix\url{https://doi.org/10.4230/LIPICS.FSCD.2025.23}

\bibitem{Hirschhorn03}
Hirschhorn, P.~S., \emph{Model categories and their localizations}, volume~99
  of \emph{Mathematical Surveys and Monographs}, American Mathematical Society
  (2003).
\newline\urlprefix\url{https://doi.org/10.1090/surv/099}

\bibitem{hoffn:spanmon}
Hoffnung, A.~E., \emph{Spans in 2-categories: A monoidal tricategory} (2013).
  \eprint{1112.0560}.
\newline\urlprefix\url{https://arxiv.org/abs/1112.0560}

\bibitem{Hovey99}
Hovey, M., \emph{Model categories}, volume~63 of \emph{Mathematical Surveys and
  Monographs}, American Mathematical Society (1999).
\newline\urlprefix\url{https://doi.org/10.1090/surv/063}

\bibitem{Hyland10}
Hyland, M., \emph{Some reasons for generalising domain theory}, Mathematical
  Structures in Computer Science \textbf{20}, page 239–265 (2010).
\newline\urlprefix\url{https://doi.org/10.1017/S0960129509990375}

\bibitem{JoyalTierney1991}
Joyal, A. and M.~Tierney, \emph{Strong stacks and classifying spaces}, in:
  A.~Carboni, M.~C. Pedicchio and G.~Rosolini, editors, \emph{Category Theory},
  pages 213--236, Springer Berlin Heidelberg, Berlin, Heidelberg (1991), ISBN
  978-3-540-46435-8.
\newline\urlprefix\url{https://doi.org/10.1007/BFb0084222}

\bibitem{KobayashiOng11}
Kobayashi, N. and C.~L. Ong, \emph{Complexity of model checking recursion
  schemes for fragments of the modal mu-calculus}, Log. Methods Comput. Sci.
  \textbf{7} (2011).
\newline\urlprefix\url{https://doi.org/10.2168/LMCS-7(4:9)2011}

\bibitem{Mellies19lics}
Melliès, P.-A., \emph{Template games and differential linear logic}, in:
  \emph{2019 34th Annual ACM/IEEE Symposium on Logic in Computer Science
  (LICS)}, pages 1--13 (2019).
\newline\urlprefix\url{https://doi.org/10.1109/LICS.2019.8785830}

\bibitem{Olimpieri20}
Olimpieri, F., \emph{Intersection Types and Resource Calculi in the
  Denotational Semantics of Lambda-Calculus}, Ph.D. thesis, Aix-Marseille
  Universit{\'e} (2020).\\
  Available online at \url{https://theses.hal.science/tel-03123485}

\bibitem{Olimpieri21}
Olimpieri, F., \emph{Intersection type distributors}, in: \emph{36th Annual
  {ACM/IEEE} Symposium on Logic in Computer Science, {LICS} 2021, Rome, Italy,
  June 29 - July 2, 2021} (2021).
\newline\urlprefix\url{https://doi.org/10.1109/LICS52264.2021.9470617}

\bibitem{paquet2023strong}
Paquet, H. and P.~Saville, \emph{Effectful semantics in bicategories: strong,
  commutative, and concurrent pseudomonads} (2024).
\newline\urlprefix\url{https://doi.org/10.1145/3661814.3662130}

\bibitem{Rezk1996}
Rezk, C., \emph{A model category for categories} (1996). Unpublished note.

\bibitem{Terui12}
Terui, K., \emph{Semantic evaluation, intersection types and complexity of
  simply typed lambda calculus}, in: A.~Tiwari, editor, \emph{23rd
  International Conference on Rewriting Techniques and Applications (RTA'12) ,
  {RTA} 2012, May 28 - June 2, 2012, Nagoya, Japan}, volume~15 of
  \emph{LIPIcs}, pages 323--338, Schloss Dagstuhl - Leibniz-Zentrum f{\"{u}}r
  Informatik (2012).
\newline\urlprefix\url{https://doi.org/10.4230/LIPICS.RTA.2012.323}

\bibitem{TAO17}
Tsukada, T., K.~Asada and C.~L. Ong, \emph{Generalised species of rigid
  resource terms}, in: \emph{32nd Annual {ACM/IEEE} Symposium on Logic in
  Computer Science, {LICS} 2017, Reykjavik, Iceland, June 20-23, 2017}, pages
  1--12, {IEEE} Computer Society (2017).
\newline\urlprefix\url{https://doi.org/10.1109/LICS.2017.8005093}

\end{thebibliography}

\end{document}